\newtheorem{theorem}{Theorem}[section]
\newtheorem{lemma}[theorem]{Lemma}
\newtheorem{corollary}[theorem]{Corollary}
\newtheorem{proposition}[theorem]{Proposition}
\newtheorem{assumption}[theorem]{Assumption}
\theoremstyle{definition}
\newtheorem{definition}[theorem]{Definition}
\newtheorem{example}[theorem]{Example}
\newtheorem{remark}[theorem]{Remark}
\numberwithin{equation}{section}
\numberwithin{theorem}{section}
\newcommand{\E}{\mathbb{E}}
\newcommand{\Xcal}{{\mathcal X}}
\newcommand{\Acal}{{\mathcal A}}
\newcommand{\Bcal}{{\mathcal B}}
\newcommand{\s}{{\mathcal S}}
\newcommand{\Dcal}{{\mathcal D}}
\newcommand{\Gcal}{{\mathcal G}}
\newcommand{\Scal}{{\mathcal S}}
\DeclareMathOperator{\supp}{supp}
\newcommand{\Fcal}{{\mathcal F}}
\renewcommand{\P}{{\mathbb P}}
\newcommand{\Pol}{\textup{Pol}}
\newcommand{\fdot}{\,\cdot\,}
\newcommand{\R}{\mathbb{R}}
\newcommand{\Q}{\mathbb{Q}}
\newcommand{\N}{\mathbb{N}}
\newcommand{\diag}{\operatorname{diag}}
\newcommand\mydots{\hbox to 1em{.\hss.\hss.}}
\begin{document}

\title{Measure-valued processes for energy markets}
\date{}

\author{Christa Cuchiero\thanks{Vienna University, Department of Statistics and Operations Research, Data Science @ Uni Vienna, Kolingasse 14-16, 1090 Wien, Austria, christa.cuchiero@univie.ac.at
}  \quad  Luca Di Persio \thanks{University of Verona, Department of Computer Science, Strada le Grazie 15, 37134 Verona, Italy, luca.dipersio@univr.it }
\quad Francesco Guida \thanks{University of Trento and University of Verona, Department of Mathematics, Via Sommarive 14, 38123 Povo, Italy, francesco.guida@unitn.it
}\\
  Sara Svaluto-Ferro \thanks{University of Verona, Department of Economics, Via Cantarane 24, 37129 
Verona, Italy, luca.dipersio@univr.it 
\newline
The first author gratefully acknowledges financial support 
through grant Y 1235 of the START-program.
}}
\date{}

\maketitle

\begin{abstract}
We introduce a framework that allows to employ (non-negative) measure-valued processes for energy market modeling, in particular for electricity and gas futures. Interpreting the process' spatial structure as time to maturity, we show how the Heath-Jarrow-Morton  approach (see~\cite{HJM:92}) can be translated to this framework, thus guaranteeing  arbitrage free modeling in  infinite dimensions. We derive an analog to the HJM-drift condition and  then treat in a Markovian setting existence of non-negative measure-valued diffusions that satisfy this condition. To analyze mathematically convenient classes we build on \cite{CGPS:21} and consider measure-valued polynomial and affine diffusions,
where we can precisely specify the diffusion part in terms of  continuous
functions satisfying certain admissibility conditions. For calibration purposes these functions
can then be parameterized by neural networks yielding measure-valued analogs of neural SPDEs.
By combining  Fourier
approaches or the moment formula with stochastic gradient descent methods, this then allows for tractable calibration procedures which we also test by way of example on market data.
We also sketch how measure-valued processes can be applied in the context of renewable energy production modeling.

\end{abstract}

\textbf{Keywords:}  HJM term structure modeling; energy markets; (neural) measure-valued processes, 
polynomial and affine diffusions; Dawson-Watanabe type superprocesses\\
\textbf{AMS MSC 2020:} 91B72, 91B74, 60J68  
\tableofcontents

\section{Introduction}

In this article we show how  to employ non-negative measure-valued processes for energy market modeling, in particular for electricity and gas futures. Before describing in detail our approach we start by discussing some important features of these markets, following \cite{BBK:08}. 

The most liquidly traded products in electricity, gas, and also temperature markets are future
contracts as well as options written on these. 
These future contracts
deliver the underlying commodity  over a certain period rather than at one fixed instance of time. By their nature,
these contracts are often also called swaps, since they represent an exchange between
fixed and variable commodity prices. We shall however employ the terms \emph{future} or \emph{future contract} throughout.

Our focus  lies on these future markets as they exhibit higher liquidity than the spot energy markets. Indeed, even though it is possible to trade power and gas commodities on the spot market, one usually faces  high  storage and transaction costs, which in turn explains the lack of liquidity on the spot market. As a consequence it is potentially more difficult to recognize and model features or characteristic patterns of the spot price behavior. Nevertheless stochastic  models for the spot prices are widely applied and also used to derive the dynamics of future prices based on no-arbitrage principles. 
Indeed, this is one of the two main approaches how to to set up 
tractable mathematical models for future contracts in power markets. The second approach consists in directly modeling  the complete forward curve by
applying a Heath-Jarrow-Morton (HJM) type approach (see \cite{HJM:92} and \cite{BK:14} in the context of energy markets). We shall adapt this second approach and model directly an analog of the forward curve, however with non-negative measure-valued processes rather than function-valued ones. In the sequel, we shall omit ``non-negative'' and only use ``measure'' for ``non-negative measure''.

There are many mathematical motivations to deal with this kind of processes. In general, measure-valued processes are often used for modeling dynamical systems for which the spatial structure plays a significant role. In the current setting,  time to maturity takes the role of the spatial structure. This is similar to common forward curve modeling via stochastic partial differential equation (SPDE) as for instance in \cite{BK:14} or \cite{BDL:21}. 
The  potential advantage of using measure-valued processes instead of function-valued
ones is that many spatial stochastic processes do not fall into the framework of SPDEs. In addition, it is often easier to establish existence of a measure-valued process rather than of an analogous SPDE, say in some Hilbert space, which would for instance correspond to its Lebesgue density, but which does not necessarily exist.

The decisive economic reason for using measure-valued processes in  electricity and gas modeling  however lies in the very nature of the future contracts, namely as products that deliver the underlying commodity always over a certain period instead of one fixed instance in time. 
To formulate this mathematically, denote the price of a future with delivery over the
 \emph{time interval} $(\tau_1, \tau_2]$ at time $0 \leq t \leq  \tau_1$  by $F (t, \tau_1 , \tau_2 )$. Then, following \cite{BBK:08}, $F (t, \tau_1 , \tau_2 )$ can be written as a weighted  integral of instantaneous forward prices $f(t,u)$ with delivery at \emph{one fixed  time}  $\tau_1 < u \leq \tau_2$, i.e.
\begin{align}\label{eq:future}
F (t, \tau_1 , \tau_2 )=\int_{\tau_1}^{\tau_2} w(u,\tau_1, \tau_2) f(t,u) du,
\end{align}
where $w(u,\tau_1, \tau_2) $ denotes some weight function.
The crucial motivation for measure-valued process now comes from the fact that
there is no trading with the instantaneous forwards $f(t,u)$ for obvious reasons.
Thus, rather than  using $f(t,u) du$ in the expression of the future prices, we can also use a measure. 

An additional motivation stems from the empirically well documented fact that energy spot prices can jump at predictable times, e.g.~due to maintenance works in the power grid or political decisions which currently influence these markets substantially. In the context of such stochastic discontinuities we refer to the pioneering work \cite{FGGS:20} in the setup of multiple yield curves. To illustrate why these predictable jumps lead to measure-valued forward processes, consider the simple example where the spot price $S$  exhibits a jump  at the deterministic time $t^*$. In this case we can write $S_u=\widetilde S_u+J\mathbbm{1}_{\{u\geq t^*\}}$ for some jump-size $J$ which is supposed to be $\mathcal{F}_{t^*}$-measurable\footnote{Here, $(\mathcal{F}_t)_{0\leq t \leq T}$ denotes some filtration supporting the spot price process and $T >0$ some finite time horizon.} and some process $\widetilde S$ that we suppose for simplicity to be continuous. As the instantaneous forward price is according to \cite[equation (1.6)]{BBK:08} given by
\[
f(t,u)=\mathbb{E}_{\Q }[S_{u}| \mathcal{F}_t]=
\mathbb{E}_{\Q }[\widetilde{S}_{u}+ J \mathbbm{1}_{\{u\geq t^*\}}| \mathcal{F}_t]
=\mathbb{E}_{\Q }[\widetilde{S}_{u}| \mathcal{F}_t]+ \E_\Q[J| \mathcal{F}_t] \mathbbm{1}_{\{u\geq t^*\}},
\] 
for each $t < t^* < T$ and $u \in [t, T]$,
this  implies that $u\mapsto f(t, u)$ is discontinuous at $u=t^*$. 
As forward curve modeling requires to take derivatives with respect to $u$, the corresponding SPDE contains a differential operator and we thus have to deal with distributional derivatives, leading naturally to measure-valued processes 
that can be treated with the analysis presented in this paper.
We also refer to \cite{AH:22} where stochastic discontinuities are also used as one motivation for term structure models based on \emph{cylindrical} measure-valued processes.

Due to these reasons one of the main goals of this article is to establish a 
\emph{Heath-Jarrow-Morton} (HJM) approach for measure-valued processes.
To this end we pass to the Musiela parametrization and thus consider time to maturity instead of time of maturity. 

More precisely, fix a finite time horizon $T$, consider  a  filtered probability space $(\Omega, \mathcal{F},$ $(\mathcal{F}_t)_{t \in [0,T]}, \mathbb{P})$, and a measure-valued process $(\mu_t)_{t \in [0,T]}$ 
supported on the compact interval $E:=[0,T]$. The measure
$ \mu_t$ will play the role of $f(t,t+x)dx$, meaning that if its Lebesgue density ${\mu_t(dx)}/{dx}$ existed, it would correspond to instantaneous forward prices at time $t$ with time to maturity $x$.
Then future prices at time $t$ given by \eqref{eq:future} become
\[
F (t, \tau_1 , \tau_2 )=\int_{(\tau_1-t, \tau_2-t]} w(t+x;\, \tau_1, \tau_2) \mu_t(dx), \quad t \in [0,\tau_1].
\]
Note that the integration domain does not include the left  boundary point $\tau_1-t$ as we suppose a delivery over the left-open interval $(\tau_1, \tau_2]$. Observe that the this choice permits to cumulatively add delivery periods, meaning that, 
$$F (t, \tau_1 , \tau_n )
=\sum_{k=1}^{n-1}\int_{(\tau_k-t,\tau_{k+1}-t]} w(t+x;\, \tau_1, \tau_n) \mu_t(dx),\qquad t\in[0,\tau_1]$$
for each $\tau_1<\cdots<\tau_n$.

The next step is now to specify  a suitable no-arbitrage condition. As long as the future contracts exist for potentially all delivery periods, it is natural to rely on no-arbitrage conditions for large financial markets allowing for a continuum of assets. In this respect the notion of \emph{no asymptotic free lunch with vanishing risk} (NAFLVR), as introduced in \cite{CKT:16}, qualifies as an appropriate and economically meaningful  condition.
Mathematically, 
in our setting, this is equivalent to the existence of an equivalent (local) martingale measure $\mathbb{Q} \sim \mathbb{P}$ under which the (discounted) price processes of these contracts are (local)  martingales. 
Modulo some technical conditions, this can then be translated to the following \emph{HJM-drift condition} on the measure-valued process $(\mu_t)_{t \in [0,T]}$ (see Theorem~\ref{th:main1}): if there exists an equivalent   measure $\mathbb{Q} \sim \mathbb{P}$ such that
\[
\langle \phi, \mu \rangle + \int_0^{\cdot} \langle \phi'_s ,\mu_s \rangle ds,
\]
is $\mathbb{Q}$-martingale for all appropriate  test functions $\phi$, then NAFLVR holds. Here, the brackets mean $\langle \phi, \mu \rangle= \int_E \phi(x) \mu(dx)$.
Note that this is a weak formulation to make sense out of 
$$
\text{``}d\mu_t(dx)=\frac{d}{dx} \mu_t(dx) dt+ dN_t(dx)\text{''},
$$
where $N$ denotes a measure-valued local martingale. Indeed, this would correspond to a (non-existing) strong formulation, well-known from function-valued forward curve modeling, see e.g.~\cite[Section 2]{BDL:21}.
Note that even though we here focus on energy markets, a similar framework can be used for term structure models in interest rate theory. The HJM-drift condition of course has to be adapted to guarantee that the bond prices are (local) martingales.

Clearly the HJM-drift condition restricts the choice of the measure-valued process since the drift part is completely determined, but we are free to specify the 
martingale part as long as we do not leave the state space of (non-negative) measures. To establish existence of measure-valued diffusion processes satisfying the drift condition, we rely on  the martingale problem formulation of \cite{EK:09} in locally compact and separable spaces which applies to our setting of measures on a compact set equipped with the weak-$*$-topology. Based on the positive maximum principle we then give sufficient conditions for the existence of such measure-valued diffusions (see Theorem~\ref{mainexistence}).
Additionally to these requirements we look for tractable specifications coming from the class of affine and polynomial processes as introduced in \cite{CGPS:21}, where we can precisely specify the diffusion part due to sufficient and (partly) necessary conditions on its form. This setup then allows for tractable pricing procedures via the moment formula well-known from polynomial processes (see e.g.~\cite{CKT:12}, \cite{FL:16}) and Fourier approaches.
This holds in particular true for the affine class as we obtain \emph{explicit solutions} for the Riccati PDEs.
These pricing methods can then of course be used for  calibration purposes. To this end we parameterize the function valued characteristics of polynomial models as neural networks to get \emph{neural measure-valued processes}, an analog to neural SDEs and SPDEs. In Section \ref{sec:cali} we exemplify such a calibration for the particular case of an affine measure-valued model which we calibrate to market call options  written on certain future contracts whose prices are obtained from the EEX German Power data 
(see \url{https://www.eex.com/en/market-data/power/options}).

Apart from modeling energy futures
via the described HJM approach, measure-valued processes also qualify for modeling quantities related to renewable energy production. In Section \ref{sec:wind} we exemplify this briefly by considering wind energy markets.

Finally, let us remark that the setting of cylindrical measures considered in \cite{AH:22} constitutes an interesting alternative  approach to term structure modeling. In contrast to our framework there cylindrical measure-valued processes arising from SPDEs  driven by some cylindrical Brownian motion are used to model the forwards. Note that being merely a cylindrical measure-valued process is a weaker requirement than being a measure-valued process. This weakening in turn allows to define SPDEs via cylindrical integrals, which would to be possible in the current measure-valued Markov process setup. 
One advantage of our approach is that we do not need Lipschitz conditions on the volatility function (imposed in  \cite{AH:22}), which is essential in order to accommodate affine and polynomial specifications. Another difference is that we consider concrete model specifications, in particular
conditions that guarantee to stay in the state-space of non-negative measures, while the focus in \cite{AH:22} lies rather on the abstract framework of cylindrical stochastic integration. Moreover, in the application to energy contracts, in \cite{AH:22}  the time \emph{of} maturity parametrization (in contrast to the Musiela parametrization) is considered, which does not require to specify a HJM drift condition. Note that we could do this as well by  specifying a measure-valued (local) martingale. The disadvantage of this approach is that -- in order to keep the interpretation as forward process -- the support of this measure-valued process should be $[t, T]$ and thus depends on the running time which is difficult to achieve. We therefore opted for the HJM approach with time \emph{to} maturity where the support of the measure-valued process is constant over time.

The remainder of the paper is organized as follows: in Section \ref{sec:notation} we introduce important notation used throughout the paper.
Section \ref{sec:market} is dedicated to define the large financial market setting and the no-arbitrage condition NAFLVR. In Section \ref{sec:HJM} we establish the HJM approach for measure-valued processes and derive the corresponding drift condition, which then implies NAFLVR. In Section \ref{sec:measurediff}
we tackle existence of measure-valued diffusions satisfying the HJM-drift condition, while Section \ref{sec:poly} is devoted to specify tractable examples of polynomial and affine type. In Section \ref{sec:options} we  show how this can be exploited for pricing of options written on future contracts, while in Section \ref{sec:cali} we present a calibration example based on a measure-valued neural affine process.
Finally Section \ref{sec:wind} sketches applications of measure-valued processes in wind energy markets.

\subsection{Notation and basic definitions}\label{sec:notation}
Throughout this article, we denote by $E$ the compact interval $[0,T]$ for some finite time horizon $T >0$ and  endow it with its Borel $\sigma$-algebra.  Moreover, we shall use the following notation.

\begin{itemize}
\item $M_+(E)$ denotes the set of finite non-negative measures on $E$, 
and $M(E)=M_+(E)-M_+(E)$ the space of signed measures of the form $\nu_+-\nu_-$ with $\nu_+,\nu_-\in M_+(E)$. We usually leave ``non-negative'' away and just say finite measures for  $M_+(E)$.
Both $M(E)$ and $M_+(E)$ are equipped with the topology of weak convergence, which turns $M_+(E)$ into a Polish space. 
Recall that $M_+(E)$ is locally compact since $E$ is compact (see, e.g.~\cite[Remark 1.2.3]{L:70}).  We denote its one-point compactification by $M^{\mathfrak{\Delta}}_+(E):= M_+(E)\cup\{\mathfrak{\Delta}\}$ and  identify  it with measures of infinite mass. 

\item $C(E)$ denotes the space of  continuous real functions on $E$ equipped with the topology of uniform convergence. We denote by $\|\fdot\|$ the supremum norm. Similary $C(M_+(E))$, $C_0(M_+(E))$, $C_c(M_+(E))$ denote, respectively, the spaces of  continuous, vanishing at infinity, and compactly supported real functions on $M_+(E)$.

\item For $k \in \mathbb{N} $, $C^k(E)$ denotes the restriction of $k$-times continuously differentiable functions on $\mathbb{R}$, denoted by $C^k(\mathbb{R})$, to $E$. Similary, $C^{\infty}(E)$ is the restriction of smooth functions on $\mathbb{R}$, denoted by $C^{\infty}(\mathbb{R})$, to $E$.

\item $\widehat C(E^k)$ is the closed subspace of $C(E^k)$ consisting of symmetric functions $f$, i.e., 
$f(x_1,\ldots,x_k)=f(x_{\sigma(1)},\ldots,x_{\sigma(k)})$ for all $\sigma\in \Sigma_k$, the permutation group of $k$ elements.
 For any $g\in\widehat C(E^k), h\in \widehat C (E^\ell)$ we denote by
$g\otimes h\in\widehat C(E^{k+\ell})$ the symmetric tensor product, given by

$$
\begin{aligned}
(g\otimes h) & (x_1,\ldots,x_{k+\ell}) \\
&= \frac{1}{(k+\ell)!} \sum_{\sigma\in\Sigma_{k+\ell}} g\big(x_{\sigma(1)},\ldots,x_{\sigma(k)}\big)h\big(x_{\sigma(k+1)},\ldots,x_{\sigma(k+\ell)}\big).
\end{aligned}
$$
 We emphasize that only symmetric tensor products are used in this paper.
\item Throughout the paper we let
$$
D \subseteq C (E)
$$
be a dense linear subspace containing the constant function 1. We set $D^{\otimes 2}=D\otimes D:=\text{span}\{g\otimes g\ :\ g\in D\}$. This generalizes  to $D^{\otimes k}$ for $k > 2$.

\end{itemize}

Two key notions that we shall often use are the \emph{positive maximum principle,} and the \emph{positive minimum principle}  for certain linear operators.
\begin{definition} 
Fix a Polish space $\Xcal$ and a subset $\s\subseteq \Xcal$. Moreover, let $D(\Acal)$ be a linear subspace of $C(\Xcal)$.
\begin{itemize}
\item 
An operator $\Acal$ 
with domain $D(\Acal)$  is said to satisfy the positive maximum principle on $\s$ if 
$$\text{$f\in D(\Acal)$, $x\in \s$, $\sup_\s f=f(x)\geq0\quad$ implies $\quad\Acal f(x)\leq0$.}$$

\item An operator $\Acal$ 
with domain $D(\Acal)$ is said to satisfy the positive minimum principle on $\s$ if 
\[
0 \leq g  \in D(\Acal),\, x \in \s,\, \inf_{\s}g=g(x)=0\quad \text{implies} \quad \mathcal{A}g(x)\geq 0.
\]
\end{itemize}
\end{definition}

\begin{remark}
Note that the positive maximum principle implies the positive minimum principle. Indeed, let $ g\geq 0$ with $\inf_\s g=g(x)=0$ and set $f=-g$ (which is possible since $D$ is a linear space). Then the positive maximum principle yields $\mathcal{A}f(x)=-\mathcal{A}g(x)\leq 0$.

Moreover, let $\Acal$ be an operator with domain $D(\Acal)$ satisfying the positive minimum principle on $\Scal$ and suppose that $1\in D(\Acal)$. 
Then $\Acal g =\Bcal g+mg$ for some map $m$ and some operator $\Bcal$ satisfying $\Bcal1=0$ and the positive maximum principle on $\Scal$. $\Bcal$ and $m$ can be explicitly constructed by setting $\Bcal g(x)=\Acal(g-g(x))(x)$ and $m(x):=\Acal 1(x)$.
\end{remark}

These notions shall play an important role to establish existence of measure-valued diffusions. Indeed, the positive maximum principle (combined with conservativity) is essentially equivalent to the existence of an $\Scal$-valued solution to the martingale problem for $\Acal$, see \cite[Theorem~4.5.4]{EK:09}. Here it is crucial that $\Scal$ is locally compact.  In our setting we shall apply this to $\Scal= M_+(E)$ which is locally compact.

The positive minimum principle will be used 
in the context of the HJM-drift condition. Note here that for generators of strongly continuous semigroups on $C(\mathcal{X})$ the  positive minimum principle is equivalent to generating a \emph{positive} semigroup, if $\mathcal{X}$ is compact, see \cite[Theorem~B.II.1.6]{arendt:86}.

\section{The large financial market and the no-arbitrage condition}\label{sec:market}

Fix a finite time horizon $T> 0$ and let $(\Omega,\mathcal{F},(\mathcal{F}_t)_{t\in [0,T]},\mathbb{P})$ be a filtered probability space, with an increasing and right-continuous filtration $(\mathcal{F}_t)_{t\in [0,T]}$. We suppose that the basic traded instruments in our market consist of 
\begin{itemize}
\item
a continuum of future contracts with delivery period over a time interval
$(\tau_1 , \tau_2 ]$ with $0\leq \tau_1 < \tau_2 \leq T$, whose prices at time $t \leq \tau_1$ are denoted by $F (t, \tau_1, \tau_2 )$, i.e.
$$
\lbrace F(t,\tau_1,\tau_2)_{t \in [0, \tau_1]} |\; 0\leq \tau_1 < \tau_2 \leq T\rbrace $$
are the traded energy products;

\item a bank account $(B_t)_{t \in [0,T]}\equiv 1$ corresponding to a constant risk-free rate  $r\equiv 0$ and chosen to be the num\'eraire.
\end{itemize}
Observe that due to the chosen risk-free rate the involved quantities are already discounted.
As a next step we now introduce conditions guaranteeing absence of arbitrage.
Due to the large financial market nature, we consider the no-arbitrage condition \emph{no asymptotic free lunch with vanishing risk} (NAFLVR) introduced in  \cite{CKT:16}.\footnote{Observe that the current setup can be embedded in Example~2.2 in \cite{CKT:16} by considering an index set $I =[0,T] \times [0,T]$ for the delivery periods from $\tau_1$ to $\tau_2$.}
In words, (NAFLVR) requires that there is no sequence of terminal payoffs of so-called admissible generalized strategies that
\begin{itemize}
    \item converges almost surely to a non-negative random variable which is strictly positive with positive probability,
    \item while their negative parts tend to 0 in $L^\infty$.
\end{itemize}
The \emph{fundamental theorem of asset pricing} in this context then asserts that NAFLVR is equivalent to the existence of a so-called equivalent separating measure $\mathbb{Q}$; see \cite[Theorem~3.2]{CKT:16}. 
In the case where all $F(\fdot, \tau_1, \tau_2)$ are locally bounded processes, equivalent separating measures correspond to equivalent local martingale measures. As we shall deal with continuous processes, we therefore have the following assumption.

\begin{assumption}[NAFLVR]\label{ass:NAFLVR}
There exists an equivalent measure $\mathbb{Q} \sim \mathbb{P}$ such that all elements in
 $
\lbrace F(t,\tau_1,\tau_2)_{t \in [0, \tau_1]} |\; 0\leq \tau_1 < \tau_2 \leq T\rbrace $
 are local martingales.
\end{assumption}

\section{A Heath-Jarrow-Morton approach for measure-va\-lued processes} \label{sec:HJM}

In this section we adapt the Heath-Jarrow-Morton (HJM) approach (see \cite{HJM:92}) to the current measure-valued setting which will in turn guarantee NAFLVR.

We start by briefly summarizing the classical HJM approach in a function-valued setup, usually in some Hilbert space, to model forward prices; see e.g.~\cite{BK:14}.
Recall that  $f(t,u)$ stands for the price at time $t$ of an instantaneous forward contract with delivery exactly at time $u$ with $0\leq t \leq u\leq T$.
Then, as outlined in the introduction, future prices $F(t, \tau_1, \tau_2)$ are given by \eqref{eq:future}. Hence, to guarantee Assumption~\ref{ass:NAFLVR} it suffices to assume
$$
(f(t,u))_{t\in[0,u]}=(\overline M_t(u))_{t\in[0,u]}, \quad \text{for all } 0 < u \leq T, $$
with $\overline M$ some (function-valued local) martingale under some equivalent measure $\mathbb{Q}$. Supposing sufficient regularity of $f$ and passing to the Musiela parametrization via 
$$\mathfrak{f}(t,x):=f(t,t+x),$$
the (local) martingale condition translates to the  SPDE
\begin{equation}\label{eq:energyspde}
\mathfrak{f}(t,x)-\int_0^t\frac d {dx}\mathfrak{f}(s,x) ds=M_t(x),
\end{equation}
where $dM_t=d\overline M_t(s+\fdot)|_{s=t}$ and hence $M$ is again a (function-valued local)-martingale.

Since the forwards cannot be traded and stochastic discontinuities in the spot prices give rise to measure-valued forward prices as explained in the introduction, the goal is now to replace the function-valued process $(\mathfrak{f}(t,\fdot))_{t \in [0,T]}$ by a measure-valued one $(\mu_t(\fdot))_{t \in [0,T]}$ supported on $E=[0,T]$. This means to model  future prices  as
\begin{equation}
\begin{split}\label{eq:future1}
F(t, \tau_1, \tau_2)&= \int_{(\tau_1-t, \tau_2-t]} w(t+x; \, \tau_1, \tau_2) \mu_t(dx)\\&=\int_E  w(t+x; \, \tau_1, \tau_2)\mathbbm{1}_{(\tau_1, \tau_2]}(t+x) \mu_t(dx),\quad t \in [0,\tau_1]
\end{split}
\end{equation}
instead of \eqref{eq:future}. Recall that by our delivery convention just the right boundary point is included in the integration domain. We shall also suppose that $ w(\fdot; \, \tau_1, \tau _2) \in C^\infty(\R)$.

As a first step let us introduce two functions sets. For functions $\Phi$ of two variables we denote by $t$ the first argument and by $x$ the second one. We thus use $\frac d {dt}\Phi$ and $\frac d {dx}\Phi$ accordingly. The two sets are
\begin{equation}\label{eq:D1}
    \begin{aligned}
D_1&=\{ \phi \in C^{\infty}(\mathbb{R}) \colon \phi'(0)=0 \}, \\
D_2(\tau_1)&=\{\Phi \in C^{\infty}(\mathbb{R}^2) \colon \frac d {dx} \Phi(t,0)=0 \text{ for all $t \in [0,\tau_1]$}\},
\end{aligned}
\end{equation}
which correspond to $C^{\infty}$-functions with vanishing $x$-derivative  at $x=0$.
  
We are now ready to provide sufficient conditions on $(\mu_t)_{t\in[0,T]}$ guaranteeing NAFLVR in sense of Assumption~\ref{ass:NAFLVR}. The result is based on the measure-valued formulation  of the HJM drift condition \eqref{eq:energyspde}, given in \eqref{eq:martmeas}. To see the analogy, observe that by integration by parts for each $\phi\in D_1$  the martingale $M$ introduced in \eqref{eq:energyspde} satisfies
\begin{align*}
    M_t
&=\int_E \phi(x)\mathfrak{f}(t,x)dx
-\int_0^t\int_E \phi(x)\frac d {dx}\mathfrak{f}(s,x)dxds\\
&=\int_E \phi(x)\mathfrak{f}(t,x)dx+\int_0^t\int_E \phi'(x)\mathfrak{f}(s,x)dxds.
\end{align*}

\begin{theorem}\label{th:main1}
Fix an equivalent measure $\mathbb{Q} \sim \mathbb{P}$ and
suppose that $(\mu_t)_{t \in[0,T]}$ is a $M_+(E)$-valued process such that 
\begin{equation}\label{eq:finitesup}
\mathbb{E}_{\mathbb{Q}}[\sup_{t \in [0,T]} \mu_t(E)] < \infty.
\end{equation}
Assume that for all $\phi \in D_1$
\begin{align}\label{eq:martmeas}
\langle \phi, \mu \rangle + \int_0^{\cdot} \langle \phi', \mu_s \rangle ds
\end{align}
is a  $\mathbb{Q}$-local martingale\footnote{Note that due to \eqref{eq:finitesup},  $\Q$-local martingality directly implies $\Q$-martingality.} on $[0,T]$.
Then the market satisfies NAFLVR in the sense of Assumption~\ref{ass:NAFLVR}. 
\end{theorem}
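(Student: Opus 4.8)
The plan is to show directly that, for every pair $0\le\tau_1<\tau_2\le T$, the process $F(\fdot,\tau_1,\tau_2)$ given by \eqref{eq:future1} is a true $\mathbb{Q}$-martingale; since true martingales are in particular local martingales, this establishes Assumption~\ref{ass:NAFLVR}. Abbreviating $g:=w(\fdot;\tau_1,\tau_2)\,\mathbbm{1}_{(\tau_1,\tau_2]}$, which is $C^\infty$ on $\mathbb{R}\setminus\{\tau_1,\tau_2\}$ and bounded, \eqref{eq:future1} reads $F(t,\tau_1,\tau_2)=\langle g(t+\fdot),\mu_t\rangle$ for $t\in[0,\tau_1]$. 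The underlying reason this should be a martingale is the transport structure: a function of the form $(t,x)\mapsto h(t+x)$ solves $\partial_t\Phi=\partial_x\Phi$, so the drift produced by the explicit time-dependence cancels the drift $-\langle\partial_x\Phi,\mu\rangle$ coming from the measure dynamics in \eqref{eq:martmeas}. Two things have to be circumvented: \eqref{eq:martmeas} is assumed only for time-independent test functions, and $g$ is not smooth at $\tau_1,\tau_2$, so $(t,x)\mapsto g(t+x)$ is not directly an admissible test function.

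First I would upgrade \eqref{eq:martmeas} to a time-inhomogeneous Dynkin formula: for every $\Phi\in D_2(\tau_1)$, the process
\[
\langle\Phi(t,\fdot),\mu_t\rangle-\int_0^t\big\langle(\partial_s\Phi-\partial_x\Phi)(s,\fdot),\mu_s\big\rangle\,ds,\qquad t\in[0,\tau_1],
\]
is a $\mathbb{Q}$-martingale. To prove this, fix $0\le s<t\le\tau_1$ and a partition $s=t_0<\cdots<t_N=t$, telescope $\langle\Phi(t,\fdot),\mu_t\rangle-\langle\Phi(s,\fdot),\mu_s\rangle$ as $\sum_i\langle\Phi(t_{i+1},\fdot)-\Phi(t_i,\fdot),\mu_{t_i}\rangle+\sum_i\langle\Phi(t_{i+1},\fdot),\mu_{t_{i+1}}-\mu_{t_i}\rangle$, apply \eqref{eq:martmeas} with the frozen test function $\Phi(t_{i+1},\fdot)$ (which lies in $D_1$ because $t_{i+1}\le\tau_1$, so $\partial_x\Phi(t_{i+1},0)=0$) to rewrite the second sum, take $\mathbb{E}_{\mathbb{Q}}[\,\cdot\,|\,\mathcal F_s]$ so that all martingale increments drop out — here using that every process in \eqref{eq:martmeas} is a genuine $\mathbb{Q}$-martingale, which follows from \eqref{eq:finitesup} exactly as in the footnote — and finally let the mesh tend to $0$. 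The resulting Riemann sums converge by dominated convergence, using uniform continuity of $\Phi,\partial_t\Phi,\partial_x\Phi$ on the compact $[0,\tau_1]\times E$, weak right-continuity of $s\mapsto\mu_s$ (which holds since the processes in \eqref{eq:martmeas} admit \cadlag versions and $D_1$ is dense in $C(E)$), and the bound $\mu_s(E)\le\sup_{r\le T}\mu_r(E)\in L^1(\mathbb{Q})$.

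Second I would approximate the discontinuous weight. Choose $g_n\in C^\infty(\mathbb{R})$ with $g_n\equiv 0$ on $(-\infty,\tau_1]$, $g_n\to g$ pointwise on $\mathbb{R}$, and $\sup_n\|g_n\|<\infty$ — for instance $g_n=\chi_n\,w(\fdot;\tau_1,\tau_2)$ for a smooth $\chi_n$ equal to $0$ on $(-\infty,\tau_1]$, to $1$ on $[\tau_1+\tfrac1n,\tau_2]$ and to $0$ on $[\tau_2+\tfrac1n,\infty)$. Then $\Phi_n(t,x):=g_n(t+x)$ belongs to $D_2(\tau_1)$, since $\partial_x\Phi_n(t,0)=g_n'(t)=0$ for $t\le\tau_1$, and it satisfies $\partial_t\Phi_n-\partial_x\Phi_n\equiv0$; hence by the first step $t\mapsto\langle g_n(t+\fdot),\mu_t\rangle$ is a $\mathbb{Q}$-martingale on $[0,\tau_1]$. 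Letting $n\to\infty$, dominated convergence with dominating variable $(\sup_n\|g_n\|)\sup_{r\le T}\mu_r(E)\in L^1(\mathbb{Q})$ gives $\langle g_n(t+\fdot),\mu_t\rangle\to\langle g(t+\fdot),\mu_t\rangle=F(t,\tau_1,\tau_2)$ both $\mathbb{Q}$-a.s.\ and in $L^1(\mathbb{Q})$, and an $L^1$-limit of martingales is a martingale; thus $F(\fdot,\tau_1,\tau_2)$ is a $\mathbb{Q}$-martingale. Since $(\tau_1,\tau_2)$ was arbitrary, Assumption~\ref{ass:NAFLVR} follows.

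I expect the first step to be the main obstacle: passing from the time-homogeneous condition \eqref{eq:martmeas} to its time-inhomogeneous counterpart without any a priori semimartingale structure on $(\mu_t)$, the delicate points being that the frozen test functions must stay in $D_1$ — which is exactly why $D_2(\tau_1)$ imposes $\partial_x\Phi(t,0)=0$ for \emph{all} $t\in[0,\tau_1]$ — and that the mesh-limit of the Riemann sums for both the $\partial_t\Phi$ and the $\partial_x\Phi$ terms must be justified uniformly, for which \eqref{eq:finitesup} is indispensable. The discontinuity of $g$, by contrast, is harmless: since $g$ vanishes on a neighbourhood of the only problematic boundary region (the left endpoint $x=0$, reached by $t+x$ precisely when $t+x\le\tau_1$), the smooth approximants $g_n$ can be chosen to vanish there as well, so the defining constraint of $D_2(\tau_1)$ survives the approximation.
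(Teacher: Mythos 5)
Your proposal is correct and has the same two-stage architecture as the paper's proof: first upgrade the drift condition \eqref{eq:martmeas} to a time-inhomogeneous Dynkin formula valid for all $\Phi\in D_2(\tau_1)$ (this is exactly the content of the paper's Lemma~\ref{lem:mart}), then apply it to smooth approximants of the discontinuous weight $w(\fdot;\tau_1,\tau_2)\mathbbm{1}_{(\tau_1,\tau_2]}$, exploit the transport identity $\partial_t\Phi_n=\partial_x\Phi_n$ to kill the drift, and pass to the limit by dominated convergence under \eqref{eq:finitesup}. Your second stage is essentially identical to the paper's (your $\chi_n$ plays the role of their $h_n$, and the verification that $\Phi_n\in D_2(\tau_1)$ via $g_n\equiv 0$ on $(-\infty,\tau_1]$ is the same computation). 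Where you genuinely diverge is in the proof of the key lemma: the paper first treats product test functions $\Phi(t,x)=\psi(t)\phi(x)$ via It\^o's product rule applied to $\psi(t)\langle\phi,\mu_t\rangle$, shows the stochastic integral $\int_0^\cdot\psi\,dZ$ is a true martingale using \eqref{eq:finitesup}, and then extends to all of $D_2(\tau_1)$ by a Nachbin/Stone--Weierstrass density argument in the $C^1$-norm; you instead telescope over a partition, freeze the test function at $t_{i+1}$ (correctly observing that $\Phi(t_{i+1},\fdot)\in D_1$ precisely because $D_2(\tau_1)$ imposes $\partial_x\Phi(t,0)=0$ for all $t\le\tau_1$), and take the mesh limit. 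Your route avoids both stochastic calculus and the $C^1$-density argument, which is a real simplification; what it costs you is the mesh limit for the $\partial_t\Phi$ Riemann sum, i.e.\ showing $\sum_i\int_{t_i}^{t_{i+1}}\langle\partial_t\Phi(u,\fdot),\mu_{t_i}-\mu_u\rangle\,du\to 0$. Your appeal to ``weak right-continuity'' of $\mu$ is the one under-justified point: it requires constructing a simultaneous c\`adl\`ag modification from the countably many martingales $\langle\phi,\mu\rangle$. A cleaner repair, entirely within your framework, is to note that $\partial_t\Phi(u,\fdot)\in D_1$ as well (differentiate the constraint $\partial_x\Phi(t,0)=0$ in $t$), apply the drift condition to this frozen test function to get $\mathbb{E}\big[\big|\langle\partial_t\Phi(u,\fdot),\mu_u-\mu_{t_i}\rangle\big|\big]=O(u-t_i)$, and conclude the Riemann-sum error is $O(\mathrm{mesh})$ in $L^1$, which suffices since you only need convergence of conditional expectations. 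With that observation your argument is complete.
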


\begin{proof}
We prove that $(F(t, \tau_1, \tau_2))_{t \in [0,\tau_1]}$ are $\mathbb{Q}$-martingales (instead of only local martingales).
Note first that by the definition in \eqref{eq:future1}
\[
F(t, \tau_1, \tau_2)=
\langle \Phi(t,\fdot), \mu_t \rangle,\quad\text{for}\quad \Phi(t,x):=w(t + x; \, \tau_1, \tau_2)\mathbbm{1}_{(\tau_1, \tau_2]}(t+x),
\] 
for all $t \in [0,\tau_1]$.
We now aim to apply 
Lemma~\ref{lem:mart}.
Let $h_n\in C^{\infty}(\R)$ be a bounded sequence of  functions approximating $\mathbbm{1}_{(\tau_1, \tau_2]}$ pointwise such that $h_n(u)=0$ for each $u\in[0,\tau_1]$. 
Define
$$
F^n(t, \tau_1, \tau_2):=\langle \Phi_n(t,\fdot), \mu_t \rangle,\quad\text{for}\quad\Phi_n(t,x):=w(t + x; \, \tau_1, \tau_2)h_n(t+x).
$$
Note that $F^n(\fdot, \tau_1, \tau_2)\in C^{\infty} (\mathbb{R})$  and 
$$
\frac d {dx} \Phi_n(t,0)=\frac d {dt} w(t ; \, \tau_1, \tau_2)h_n(t)+w(t ; \, \tau_1, \tau_2)h'_n(t)= 0, \quad t \in [0, \tau_1],
$$
as $h_n(t)=0$ and $h_n'(t)=0$ for all $t \in [0, \tau_1]$. This shows that $\Phi_n\in D_2(\tau_1)$.
Moreover, $\frac d {dx} \Phi_n(t,x)=\frac d {dt} \Phi_n(t,x)$ for all $(t,x) \in [0, \tau_1] \times E$. Therefore we have for all $t \in [0,\tau_1]$
\begin{align*}
F^n(t, \tau_1, \tau_2)
=\langle \Phi_n(t,\fdot), \mu_t \rangle 
 =
\langle  \Phi_n(t,\fdot), \mu_t \rangle  
+ \int_0^t( \langle \frac d {dx}   \Phi_n(s,\fdot), \mu_s \rangle - \langle \frac d {dt}  \Phi_n(s,\fdot), \mu_s \rangle )ds,
\end{align*}
and Lemma~\ref{lem:mart} thus implies the martingale property of $(F^n(t, \tau_1, \tau_2))_{t \in [0,\tau_1]}$.
By dominated convergence $F^n(t, \tau_1, \tau_2)$ converges to $F(t, \tau_1, \tau_2)$ and
$$F^n(t, \tau_1, \tau_2) \leq \sup_{u\in[\tau_1,\tau_2]}|w(u;\tau_1,\tau_2)| \mu_t(E),$$
which is integrable with respect to $\mathbb{Q}$. Putting this together yields
\begin{align*}
\mathbb{E}_{\mathbb{Q}}[F(t, \tau_1, \tau_2) | \mathcal{F}_s]&=\mathbb{E}_{\mathbb{Q}}[\lim_{n \to \infty} F^n(t, \tau_1, \tau_2) | \mathcal{F}_s]=\lim_{n \to \infty}  \mathbb{E}_{\mathbb{Q}}[F^n(t, \tau_1, \tau_2) | \mathcal{F}_s]
= F(s, \tau_1, \tau_2),
\end{align*}
whence the martingale property of $(F(t, \tau_1, \tau_2))_{t \in [0, \tau_1]}$ under $\mathbb{Q}$.
\end{proof}

\begin{remark} 
Note that the setting of Theorem~\ref{th:main1} includes the most commonly  used weight function given by
\[
w(t+x; \, \tau_1, \tau_2)= \frac{1}{\tau_2- \tau_1},
\]
see \cite[Chapter 4]{BBK:08}. 
\end{remark}

\begin{remark}
In the setup considered so far the spot price did not play role. We here discuss how it can be deduced in the current framework. Fix some $t \in E$ and denote the spot price by $S_t$.
Note that in accordance with \cite[Chapter 4]{BBK:08}, $S_t$ would correspond to $$S_t=\lim_{\tau_1, \tau_2 \to t }F(t, \tau_1, \tau_2).$$ Using the above weight function this yields
\[
\lim_{\tau_1, \tau_2 \to t }F(t, \tau_1, \tau_2)= \lim_{\tau \to 0} \frac{1}{\tau}  \mu_t((0,\tau]).
\]
By the Lebesgue differentiation theorem this limit is well-defined if the measure $\overline \mu_t$ given by $\overline \mu_t(A):=\mu_t( A\setminus\{0\})$  is absolutely continuous with respect to the Lebesgue measure around $0$. Since this is not necessarily satisfied, we can alternatively consider the following definition. 
As the spot price corresponds in reality to a delivery over one of the next hours, 
we can fix some $\tau^*> 0$ and define $S_t$ via $S_t=\frac{1}{\tau^*}  \mu_t((0,\tau^*]).$
\end{remark}

\begin{remark}\label{rem:supp}
Observe that given a $M_+(E)$-valued martingale $(\nu_t)_{t\in[0,T]}$ the $M_+(E)$-valued process $(\mu_t)_{t\in[0,T]}$
obtained via
$$\mu_t(A)=\nu_t(A+t)+\nu_t([0,t))\delta_0(A)$$
satisfies the HJM drift's condition \eqref{eq:martmeas}. This observation is interesting since it permits to conclude that
$$\supp(\mu_t)\subseteq E\cap (E-t)\cup \{0\}=[0,T-t].$$
Since $\supp(\nu_t)\subseteq \supp(\nu_0)$ we can even get that
$$\supp(\mu_t)\subseteq E\cap (\supp(\nu_t)-t)\cup \{0\}=E\cap(\supp(\nu_0)-t)\cup\{0\}.$$

To prove that $(\mu_t)_{t\in[0,T]}$ satisfies \eqref{eq:martmeas}   we proceed as in the proof of Lemma~\ref{lem:mart} computing the dynamics of $\langle \phi,\nu_t\rangle\psi(t)$.
\begin{align*}
\langle \phi,\nu_t\rangle\psi(t)
   &=\langle \phi,\nu_0\rangle\psi(0)+ \int_0^t\langle \phi,\nu_s\rangle\psi'(s)ds+\text{(martingale)}\\
   &=\langle \Psi(0,\cdot), \nu_0\rangle+\int_0^t\langle \frac d {dt}\Psi(s,\fdot),\nu_s\rangle ds+\text{(martingale)},
\end{align*}
for $\Psi(s,x)=\psi(s)\phi(x)$. Extending it to all continuous bounded maps and applying it to $\Psi(s,x)=\Phi((x-s)^+)$ for any $C^1$-function $\Phi$ with $\Phi'(0)=0$ we get
\begin{align*}
\langle \Phi,\mu_t\rangle
&=\langle \Phi(\fdot-t),\nu_t\rangle+\Phi(0)\nu_t([0,t)]\\
&=\langle \Phi((\fdot-t)^+),\nu_t\rangle\\
   &=\int_0^t-\langle \Phi'(\fdot-s)\mathbbm 1_{\{\fdot\geq s\}},\nu_s\rangle ds+\text{(local martingale)}\\
      &=\int_0^t-\langle \Phi',\mu_s\rangle ds+\text{(local martingale)},
\end{align*}
as claimed.
\end{remark}

We explain now how \eqref{eq:martmeas} can be relaxed without compromising the results of Theorem~\ref{th:main1}. We also illustrate that these changes just affect the dynamics of the mass that $\mu_t$ puts in $\{0\}$, which does not play a role for \eqref{eq:future1}. 
\begin{remark}\label{rem:remarkHJMcond}
Suppose now that \eqref{eq:martmeas} is replaced by 
\begin{align}\label{eq:martmeas1}
\langle \phi, \mu \rangle+  \int_0^{\cdot} \Big(\langle \phi', \mu_s \rangle + \gamma\phi(0) \mu_s(0)\Big)
ds \end{align}
being a  $\mathbb{Q}$-local martingale on $[0,T]$ for all $\phi \in D_1$ for some $\gamma\in \R$. 
The  extra term $\gamma\phi(0) \mu_s(0)$ regulates the amount of extra mass that can disappear ($\gamma>0$) or appear ($\gamma<0$) from the set $\{0\}$. 
Observe indeed that  if $(\mu_t)_{t\in[0,T]}$ satisfies \eqref{eq:martmeas} then 
the $M_+(E)$-valued process $$\nu_t:=\mu_t+\gamma_t\delta_0$$ satisfies \eqref{eq:martmeas1} for 
$\gamma_t:=-\gamma e^{-\gamma t}\int_0^t\mu_s(0)e^{\gamma s}ds.$ 

Indeed, by the product formula we  have
$$\gamma_t=\Big(-\gamma e^{-\gamma t}\Big)\Big(\int_0^t\mu_s(\{0\})e^{\gamma s}ds\Big)
=-\gamma\int_0^t\gamma_s+\mu_s(0) ds=-\gamma\int_0^t \nu_s(0) ds,
$$
where the last equality follows since $\gamma_s+\mu_s(0)=\nu_s(0)$.  As  $\langle \phi',\mu_s\rangle=\langle \phi',\nu_s\rangle$ for $\phi\in D_1$, this  implies
\begin{align*}
    \langle \phi,\nu_t\rangle&=\langle \phi,\mu_t\rangle+\phi(0)\gamma_t\\
&=-\int_0^t\langle \phi',\nu_s\rangle +\gamma \phi(0)\nu_s(0) ds+\text{(local martingale)},
\end{align*}
proving the claim.

As for the computation of
\[
F(t, \tau_1, \tau_2)=
\langle \Phi(t,\fdot), \mu_t \rangle,\quad\text{for}\quad \Phi(t,x):=w(t + x; \, \tau_1, \tau_2)\mathbbm{1}_{(\tau_1, \tau_2]}(t+x),
\] 
the mass at  $0$ does not play a role, the assertions of Theorem~\ref{th:main1} hold true if \eqref{eq:martmeas} is replaced by \eqref{eq:martmeas1}. This can be rigorously shown  by adapting the proof of Lemma~\ref{lem:mart}.
 We shall come back to this condition in the next section.
\end{remark}

\begin{remark}\label{rem:discrete}
Note that it is in principle also possible to consider a discrete set of times to maturity, e.g. in the simplest case $E=\{0, 1, 2, \ldots, T\}$ combined with a discrete time setup with $t \in E$.  Then the measure-valued process $\mu$ becomes actually a discrete time stochastic process with values in $\mathbb{R}^{T+1}$ and  $\langle \fdot, \fdot \rangle$ can be identified with the scalar product on $\mathbb{R}^{T+1}$, meaning that $\langle \phi,\mu\rangle=\sum_{i=0}^T\phi(i)\mu(i)$. 

Set $\phi'(i)=(\phi(i)-\phi(i-1))\mathbbm 1_{\{i>0\}}$, assume that $(\mu_t)_{t\in[0,T]}$ is a process in $\R^{T+1}$ with non-negative components, $\E[\sup_{t\in[0,T]}\sum_{i=0}^T\mu_t(i)]<\infty$, and 
\begin{align} \label{eq:martdiscrete}
\langle \phi,
\mu_t\rangle+ \sum_{j=1}^{t} \langle \phi',\mu_{j-1}\rangle
\end{align}
is a local martingale\footnote{Note that, as in the continuous case, with the requested integrability condition local martingality implies martingality.}  on $[0,T]$ for each $\phi:E\to \R$.
Then the (discounted) price of the future 
$$F(t, \tau_1, \tau_2)
=\sum_{ i=0 }^T \mathbbm{1}_{( \tau_1, \tau_2]}(t+i) w(t+i; \tau_1, \tau_2) \mu_t(i)$$
is a martingale on $[0,\tau_1]$ for each $w:\R\times E\times E\to\R$.

This result can be shown following the proof of Theorem~\ref{th:main1}. In particular,  
let $Z$ be the local martingale given by \eqref{eq:martdiscrete} and note that
 \begin{align*}
     \langle \phi,\mu_t\rangle&=\langle \phi,\mu_0\rangle+\sum_{j=1}^t\big(-\langle \phi',\mu_{j-1}\rangle+(Z_j-Z_{j-1})\big),\\
     \psi(t)&=\psi(0)+\sum_{j=1}^t\psi'(j),
 \end{align*}
for each $\phi,\psi:E\to\R$. The summation by parts formula  yields
 \begin{align*}
     \psi(t+1)\langle \phi,\mu_t\rangle
     &=\psi(0)\langle \phi,\mu_0\rangle
     +\sum_{j=0}^t\psi'(j+1)\langle \phi,\mu_j\rangle
     +\sum_{j=1}^t-\psi(j)\langle \phi',\mu_{j-1}\rangle
     \\&\quad+\sum_{j=1}^t\psi(j)(Z_j-Z_{j-1}),
 \end{align*}
 and by reordering the terms and noting that $\sum_{j=1}^t\psi(j)(Z_j-Z_{j-1})$ is a local martingale on $[0,\tau_1]$ we get
  \begin{align*}
     \psi(t)\langle \phi,\mu_t\rangle&
     -\psi(0)\langle \phi,\mu_0\rangle\\
     &=
     \sum_{j=0}^{t-1}\psi'(j+1)\langle \phi,\mu_j\rangle
     +\sum_{j=1}^t-\psi(j)\langle \phi',\mu_{j-1}\rangle
     +(\text{local martingale})\\
     &=
     \sum_{j=1}^t\Big(\psi'(j)\langle \phi,\mu_{j-1}\rangle
     -\psi(j)\langle \phi',\mu_{j-1}\rangle\Big)
  +(\text{local martingale})\\
          &=
     \sum_{j=1}^t\sum_{i=0}^T\Big( \psi'(j)\phi(i)
     -\psi(j)\phi'(i)\Big)\mu_{j-1}(i)
    +(\text{local martingale})\\
     &=
     \sum_{j=1}^t\sum_{i=0}^T\Big( -\psi(j-1)\phi(i)
     +\psi(j)\phi(i-1)\Big)\mu_{j-1}(i)
    +(\text{local martingale}).
 \end{align*}
Proceeding as in Lemma~\ref{lem:mart} we can then show that for every $\Phi: \R \times E\times E\to  \mathbb{R}$ the process
\begin{align*}
\langle \Phi(t,\fdot), \mu_t \rangle+ \sum_{j=1}^t \sum_{i=1}^T \Big(\Phi(j-1,i)-\Phi(j,i-1)\Big) 
\mu_{j-1}(i),
\end{align*}
is a true martingale on $[0,\tau_1]$.
Since setting $\Phi(t,i)=\sum_{ i=0 }^T \mathbbm{1}_{( \tau_1, \tau_2]}(t+i) w(t+i; \tau_1, \tau_2) \mu_t(i)$ it holds
$$\Phi(j-1,i)-\Phi(j,i-1)=0,$$
proceeding as in the proof of Theorem~\ref{th:main1} we can conclude that  $(F(t, \tau_1, \tau_2))_{t\in[0,\tau_1]}$ is a martingale too.

Analogously to \eqref{eq:martmeas1} we can also add the term
$\sum_{j=1}^t \phi(0) \mu_{j-1}(0)$ to \eqref{eq:martdiscrete} and require that
\begin{align}\label{eq:martdiscrete1}
\langle \phi, \mu_t \rangle+ \sum_{j=1}^{t} \Big( \langle\phi', \mu_{j-1}\rangle +\gamma \phi(0) \mu_{j-1}(0) \Big), 
\end{align}
is a martingale on $[0,\tau_1]$. Accordingly for $\Phi: [0,\tau_1] \cap E\times E\to  \mathbb{R}$
the expression 
\[
\langle \Phi(t,\fdot), \mu_t \rangle+ \sum_{j=1}^t \sum_{i=1}^T\Big( (\Phi(j-1,i)-\Phi(j,i-1))  \mu_{j-1}(i)+ \gamma\Phi(j,0) \mu_{j-1}(0)\Big)
\]
is then a martingale. As above this does not change anything for the future prices since $\Phi(j,0)=\mathbbm{1}_{(\tau_1, \tau_2 ]}(j)w(j, \tau_1, \tau_2)=0$ for all $j \in \{1, \ldots t\}$ as $t \in[0, \tau_1]$. 
\end{remark}

\section{Measure-valued diffusions satisfying the HJM condition} \label{sec:measurediff}

So far we implicitly assumed the existence of a \emph{non-negative} measure-valued process satisfying the HJM-drift condition \eqref{eq:martmeas}.
 To analyze when this non-negativity  together with \eqref{eq:martmeas} holds true, we consider a Markovian setting where the process $(\mu_t)_{t\in [0,T]}$ can be described by its (extended) generator. To this end we need to introduce spaces of functions on which these operators act. These are  cylindrical functions and polynomials with certain regularity properties.
 
\subsection{Cylindrical  functions and polynomials}
\label{sec:cylind}

Recall that 
$
D \subseteq C(E)
$
is a dense linear subspace containing the constant function 1.
The set of \emph{cylindrical functions} we  consider are  maps from $M(E)$ to $\mathbb{R}$ lying in
$$
F^{D}=\Big\{ \nu \mapsto f(\nu)=\Psi(\langle g_1, \nu \rangle, \ldots, \langle g_m, \nu \rangle)
\colon   \Psi \in C^{\infty}(\mathbb{R}^m),  g_k \in D,\,  m\in \mathbb{N}_0\Big\}.
$$
We shall denote the restriction of $F^D$ to $M_+(E)$ by 
\[
F^D(M_+(E)):=\{f|_{M_+(E)}\colon f\in F^D\}.
\]
Moreover, since elements in $F^D(M_+(E))$ are not necessarily compactly supported we  also define the following sets
\begin{align}\label{F^D_c}
F^D_0:=F^D(M_+(E)) \cap C_0(M_+(E))\qquad\text{and}\qquad
F^D_c:=F^D(M_+(E)) \cap C_c(M_+(E)).
\end{align}
These sets will serve as domain for the linear operators that we introduce subsequently.

Similarly to cylindrical functions, we  also consider cylindrical polynomials. To this end
denote by $\Pol(\R^m)$ the set of polynomials on $\R^m$. We then call  a map $p: M(E) \to \mathbb{R}$ \emph{cylindrical polynomial} if it lies in the set
$$
P^{D}:=\Big\{\nu \mapsto p(\nu)=\Psi(\langle g_1, \nu \rangle, \ldots, \langle g_m, \nu \rangle)
\colon   \Psi \in \Pol(\R^m),\, g_k \in D,\,  m\in \mathbb{N}_0\Big\}.
$$
Note that $P^D$ consists of all (finite) linear combinations of the constant polynomial and ``rank-one'' monomials $\langle g\otimes\cdots\otimes g,\nu^k\rangle=\langle g,\nu\rangle^k$ with $g\in D$.
Analogously to cylindrical functions we write $P^D(M_+(E))$ for the restriction of $P^D$ to $M_+(E)$.

\subsection{Directional derivatives}

In order to be able to consider certain differential operators acting on functions with measure arguments, we need to introduce the notion of derivatives.
We shall throughout apply 
directional derivatives. More precisely, 
a function $f\colon M(E)\to\R$ is called differentiable at $\nu$ in direction $\delta_x$ for $x\in E$ if
\[
\partial_x f(\nu) := \lim_{\varepsilon\to0} \frac{f(\nu+\varepsilon\delta_x)-f(\nu)}{\varepsilon}
\]
exists. We write $\partial f(\nu)$ for the map $x\mapsto\partial_xf(\nu)$, and we use the notation
\[
\partial^k_{x_1x_2\cdots x_k} f(\nu) := \partial_{x_1}\partial_{x_2}\cdots \partial_{x_k} f(\nu)
\]
for iterated derivatives. We write $\partial^k f(\nu)$ for the corresponding map from $E^k$ to $\R$.

\begin{remark}
Note for all functions in $F^D$ and $k \in \mathbb{N}$, it holds $\partial^k f(\nu) \in D^{\otimes k}$. In particular, setting $f(\nu)=\Psi(\langle g_1, \nu \rangle, \ldots, \langle g_m, \nu \rangle)$  and $\vec g:=(g_1,\ldots,g_m)^\top$ we get
\begin{align*}
    \partial_{x}f(\nu)&=\nabla \Psi(\langle \vec g, \nu \rangle)^\top\vec g(x),\\
\partial^2_{xy}f(\nu)&=\vec g(x)^\top\nabla^2 \Psi(\langle \vec g, \nu \rangle)\vec g(y).
\end{align*}
\end{remark}

\subsection{Diffusion-type operators and martingale problems}

We shall consider here diffusion type operators that correspond to second-order differential operators, precisely introduced in the next definition. As domain we consider a generic set $D(L)\subseteq F^D$. Later on $D(L)$ will be given by $F^D_c$ as defined in \eqref{F^D_c} but also by other sets which are more convenient in the polynomial setting (see Section~\ref{sec:poly}).

\begin{definition}
A linear operator $L:D(L)\to C_0(M_+(E))$ is called diffusion-type operator if  it admits a representation
\begin{align}\label{eq:diff}
Lf(\nu) = &B(\partial f(\nu), \nu) + \frac{1}{2} Q(\partial^2 f(\nu), \nu)
\end{align}
for some operators $B:D \times M_+(E) \to\R$ and $Q:D\otimes D \times M_+(E) \to \mathbb{R}$ such that $B(\fdot, \nu)$ and $Q(\fdot, \nu)$ are linear for all $\nu \in M_+(E)$.
\end{definition}

To such operators we can then associate measure-valued processes via the martingale problem.

Let $L$ be a linear operator acting on $D(L)\subseteq F^D$  and being of form \eqref{eq:diff}.
A $M_+(E)$-valued process $(\mu_t)_{t\in[0,T]}$ with continuous trajectories defined on some filtered probability space $(\Omega, \Fcal,(\Fcal_t)_{t\in [0,T]}, \mathbb{Q})$ \footnote{We here consider directly a risk neutral measure $\mathbb{Q}$ because we are interested when \eqref{eq:martmeas} is a martingale under $\mathbb{Q}$.} is called a $M_+(E)$-valued {\em solution to the martingale problem for $L$} with initial condition $\nu\in M_+(E)$ if $X_0=\nu$ $\mathbb{Q}$-a.s.~and
\begin{equation}\label{martprob}
N^f_t = f(\mu_t) - f(\mu_0) - \int_0^t Lf(\mu_s) ds, \quad t \in [0,T]
\end{equation}
defines a local martingale for every $f$ in the domain of $L$.
Uniqueness of solutions to the martingale problem is always understood in the sense of law. The martingale problem for $L$ is {\em well--posed} if for every initial condition $\nu \in M_+(E)$ there exists a unique $M_+(E)$-valued solution to the martingale problem for $L$ with initial condition~$\nu$.

\subsection{Existence of measure-valued diffusions satisfying the HJM drift-condition}

We are now concerned with proving existence of $M_+(E)$-valued-solution to such types of martingale problems. The following theorem provides sufficient conditions. For its formulation recall the space of  $F^D_c$ as defined in \eqref{F^D_c} and  the definition of $D_1$  given in \eqref{eq:D1}.

\begin{theorem}\label{mainexistence}
Let  $L:F^{D_1}_c\to C_0(M_+(E))$ be  a linear operator of form \eqref{eq:diff} for some $Q$ such that 
$ \nu \mapsto Q(\partial^2 f(\nu), \nu) \in C_0(M_+(E))$ for all $f \in F^{D_1}_c$. 
Suppose that the drift part $B$ is given by
\begin{align}\label{eq:B}
B(\partial f(\nu), \nu)= - \langle \frac{d}{dx} \partial f(\nu), \nu \rangle.
\end{align}
If the diffusion part $Q$ satisfies the positive maximum principle on $M_+(E)$, i.e.~
\[
f \in F^{D_1}_c, \, \nu^* \in M_+(E), \sup_{M_+(E)}f = f(\nu^*) \geq 0 \quad \text{implies} \quad Q(\partial^2 f(\nu^*), \nu^*) \leq 0,
\]
then there exists a $M_+(E)$-valued solution to the martingale problem for $L$ which satisfies the HJM-drift condition \eqref{eq:martmeas}.
\end{theorem}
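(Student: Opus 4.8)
The strategy is to invoke the general existence theory for martingale problems on locally compact spaces — specifically \cite[Theorem~4.5.4]{EK:09} — applied to $\mathcal S = M_+(E)$ with the weak-$*$ topology, which we already noted is locally compact and separable. To do so, I would first assemble the full generator $L = B + \tfrac12 Q$ on the domain $F^{D_1}_c$ and check the structural hypotheses needed to apply that theorem: that $L$ maps into $C_0(M_+(E))$ (this holds for $Q$ by assumption, and I would need to verify it for the drift term $B(\partial f(\nu),\nu) = -\langle \tfrac{d}{dx}\partial f(\nu),\nu\rangle$, using that for $f\in F^{D_1}$ one has $\partial f(\nu)\in D_1$, so $\tfrac{d}{dx}\partial f(\nu)$ is continuous and bounded on $E$, and that $\nu\mapsto\langle h,\nu\rangle$ is continuous and vanishes at infinity appropriately when $f$ has compact support); that $D(L) = F^{D_1}_c$ is dense in $C_0(M_+(E))$ and separating (this should follow from density of $D_1$ in $C(E)$ and a Stone–Weierstrass / monotone class argument on cylindrical functions, noting $1\in D_1$ since the constant has zero derivative); and that $L$ satisfies the positive maximum principle on $M_+(E)$.

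The positive maximum principle for the whole operator $L$ is the crux. For $f\in F^{D_1}_c$ attaining a nonnegative maximum at $\nu^*$, the hypothesis gives $Q(\partial^2 f(\nu^*),\nu^*)\le 0$ directly. The point is that the drift contribution $B(\partial f(\nu^*),\nu^*) = -\langle \tfrac{d}{dx}\partial f(\nu^*),\nu^*\rangle$ is also $\le 0$ at such a maximizing point. Here I would argue as follows: writing $f(\nu)=\Psi(\langle g_1,\nu\rangle,\dots,\langle g_m,\nu\rangle)$ with $g_k\in D_1$, we have $\partial f(\nu^*)(x) = \nabla\Psi(\langle\vec g,\nu^*\rangle)^\top\vec g(x)$, so $\tfrac{d}{dx}\partial f(\nu^*)(x) = \nabla\Psi(\langle\vec g,\nu^*\rangle)^\top\vec g\,'(x)$; the key structural fact — which I would isolate as the main technical point — is that the first-order operator $f\mapsto B(\partial f(\nu),\nu)$ is the (extended) generator of the deterministic semigroup induced by the translation flow $x\mapsto (x-t)^+$ on measures described in Remark~\ref{rem:supp}, which is a positive semigroup; hence by the equivalence between the positive minimum principle and positivity of the semigroup (cited after the Remark following the definition of the two principles), and the remark that the positive maximum principle passes to sums when the pieces individually satisfy it, $B$ satisfies the positive maximum principle on $M_+(E)$, and therefore so does $L = B + \tfrac12 Q$. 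I expect verifying this positivity claim for $B$ — i.e. that the pushforward of a nonnegative measure under $x\mapsto(x-t)^+$ stays nonnegative, together with the boundary bookkeeping at $0$ encoded in $D_1$'s condition $\phi'(0)=0$ — to be the genuinely delicate step, since one must be careful about the mass accumulating at the boundary point $0$.

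Once the positive maximum principle for $L$ is established, \cite[Theorem~4.5.4]{EK:09} yields a solution $(\mu_t)_{t\in[0,T]}$ to the martingale problem for $L$ taking values in the one-point compactification $M^{\mathfrak\Delta}_+(E)$; I would then argue it does not reach $\mathfrak\Delta$ (equivalently, stays in $M_+(E)$) — using conservativity, which can be checked by approximating the constant-mass functional $\nu\mapsto\langle 1,\nu\rangle$ by functions in $F^{D_1}_c$ and noting $L$ applied to the mass functional gives $B(\partial\langle 1,\cdot\rangle(\nu),\nu) = -\langle \tfrac{d}{dx}1,\nu\rangle = 0$ plus a controlled $Q$-term, so the mass process is a nonnegative local martingale, hence bounded in expectation and non-exploding. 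Finally, to conclude that this $(\mu_t)$ satisfies the HJM-drift condition \eqref{eq:martmeas}, I would take $\phi\in D_1$, approximate the cylindrical function $\nu\mapsto\langle\phi,\nu\rangle$ by elements of $F^{D_1}_c$ (localizing, since $\langle\phi,\cdot\rangle$ is not compactly supported), apply the martingale property \eqref{martprob} to these approximants, observe that $L$ applied to $\nu\mapsto\langle\phi,\nu\rangle$ equals $B(\phi,\nu) = -\langle \tfrac{d}{dx}\phi,\nu\rangle = -\langle\phi',\nu\rangle$ (the $Q$-term vanishes because the second directional derivative of a linear functional is zero), and pass to the limit using a localizing sequence of stopping times, recovering exactly that $\langle\phi,\mu\rangle + \int_0^\cdot\langle\phi',\mu_s\rangle\,ds$ is a local martingale.
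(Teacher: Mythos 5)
Your proposal is correct and follows the paper's overall skeleton (density of $F^{D_1}_c$, mapping into $C_0(M_+(E))$, positive maximum principle for $L$, then \cite[Theorem~4.5.4]{EK:09} via the paper's Lemma~\ref{lem:martingaleprob}, conservativity through the mass functional, and an approximation argument to extract \eqref{eq:martmeas}), but it handles the one genuinely non-routine step --- the positive maximum principle for the drift $B$ --- by a different argument. The paper's proof is variational: it invokes \cite[Theorem~3.1]{CGPS:21} to get the first-order optimality conditions at a maximizer $\nu^*$ on the cone $M_+(E)$, namely $\partial_x f(\nu^*)\leq 0$ on $E$ with equality on $\supp(\nu^*)$, and then observes that $x\mapsto\partial_xf(\nu^*)$ attains its maximum at every support point, so that $-\frac{d}{dx}\partial_xf(\nu^*)\leq 0$ there (using $\partial f(\nu^*)\in D_1$ at $x=0$ and the right-endpoint inequality at $x=T$), whence $B(\partial f(\nu^*),\nu^*)\leq 0$. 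You instead identify $B$ dynamically as the generator of the Koopman semigroup of the pushforward flow $\nu\mapsto (T_t)_*\nu$, $T_t(x)=(x-t)^+$, from Remark~\ref{rem:supp}; since this flow maps $M_+(E)$ into itself, $t\mapsto f((T_t)_*\nu^*)$ is maximized at $t=0$ and its right derivative there, which equals $-\langle\frac{d}{dx}\partial f(\nu^*),\nu^*\rangle$ precisely because the $D_1$-condition $\phi'(0)=0$ kills the boundary correction at $0$, is $\leq 0$. Both arguments are valid; yours is more conceptual and explains \emph{why} the drift is admissible (it transports mass within the state space), while the paper's is shorter given the machinery of \cite{CGPS:21} and avoids verifying the generator identification. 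One caution: you should derive the maximum principle for $B$ directly from the flow as just described rather than via the cited equivalence between the positive \emph{minimum} principle and positivity of a semigroup --- that equivalence is stated for compact state spaces and concerns the wrong principle (the minimum principle does not imply the maximum principle, and only the latter feeds into \cite[Theorem~4.5.4]{EK:09}); the direct derivative-at-a-maximum argument is what actually closes the step, and it does so without any semigroup theory.
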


\begin{proof}
We verify now that the conditions of Lemma~\ref{lem:martingaleprob} are satisfied. 
First note that $D_1$ is a dense subset of $C(E)$ which contains $1$ and
$$ B(\partial f(\fdot), \fdot)= - \langle \frac{d}{dx} \partial f(\fdot), \fdot \rangle$$
lies in $ C_0(M_+(E))$
for all $f \in F^{D_1}_c$. Moreover, note that
\[
B( 1, \nu)  =0,
\]
whence \eqref{eq:growth} is satisfied. Since $Q$ satisfies the positive maximum principle by assumption, it remains to prove that $B$ does it as well. By \cite[Theorem~3.1]{CGPS:21} we know that for each $f,\nu^*$ such that
$$f \in F^{D_1}_c, \, \nu^* \in M_+(E), \sup_{M_+(E)}f = f(\nu^*) \geq 0$$
it holds
$\partial_x f(\nu^*) \leq 0$ for each $x\in E$ with equality for $x\in\supp(\nu^*)$. Since $E=[0,T]$ this in particular implies that $\partial_x f(\nu^*)$ has a maximum over $[0,T]$ in $x$ for each $x\in \supp(\nu^*)$ and thus $-\frac{d}{dx} \partial_x f(\nu^*) = 0$ for each $x\in \supp(\nu^*)\cap(0,T)$ and $-\frac{d}{dx} \partial_T f(\nu^*) \leq 0$. Since $\partial f(\nu^*)\in D_1$ we also know that its derivative vanishes in 0 and thus
\[
B(\partial f(\nu^*),\nu^*)=\langle -\frac{d}{dx} \partial f(\nu^*), \nu^* \rangle \leq 0,
\]
what we needed to prove. 

\end{proof}

\begin{remark}
If condition \eqref{eq:martmeas} is replaced by \eqref{eq:martmeas1} as outlined in Remark
\ref{rem:remarkHJMcond}, then condition \eqref{eq:B} translates to 
 \begin{align}\label{eq:Balt}
 B(\partial f(\nu), \nu)=-\langle \frac{d}{dx} \partial f(\nu) + \gamma\mathbbm{1}_{\{x=0\}}  \partial f(\nu), \nu \rangle.
 \end{align}
Note however that $ \nu \mapsto B(\partial f(\nu), \nu)$ is not continuous in general, since $\nu \mapsto \nu(0)$ is  not continuous. This could be recovered by restricting $D_1$ to maps vanishing in 0, but this new set is not dense in $C(E)$. To overcome the technical problems introduced by the lack of density we would need to slightly adapt all our technical results (in particular Lemma~\ref{lem:martingaleprob}) replacing $C(E)$ with $C_0(E\setminus\{0\})$. For seek of exposition we will not deepen this direction any further.

Note that for $\gamma=0$ the drift operator $B$
is the generator of the probability measure-valued process which describes the ``law'' of the deterministic spatial motion $dX_t=-dt$, 
absorbed at $0$ due the definition of $D_1$. Adding $-\langle \gamma\mathbbm{1}_{\{x =0\}}\partial f(\nu), \nu \rangle$ to $B$, which amounts to consider the operator given in \eqref{eq:Balt}, can be interpreted as adding a killing term to the spatial motion, meaning that the particles are killed at rate $\gamma$ after being absorbed at $0$. Indeed, in this case the generator of the spatial motion is given by
\[
\mathcal{A}g= -\frac{d}{dx}g - \gamma \mathbbm{1}_{\{x =0\}}g.
\]
\end{remark}

\begin{remark}
Note that in order to apply Theorem~\ref{th:main1}, the solution to the martingale problem $(\mu_t)_{t \in [0,T]}$ provided by Theorem~\ref{mainexistence} additionally needs to satisfy \eqref{eq:finitesup}, namely
$$
\mathbb{E}_{\mathbb{Q}}[\sup_{t \in [0,T]} \mu_t(E)] < \infty.
$$
By the HJM-drift condition  we know that $ \mu(E)= \langle 1, \mu \rangle$ is a local martingale. The BDG inequalities then yield
\[
\mathbb{E}_{\mathbb{Q}}[\sup_{t \in [0,T]} \mu_t(E)] \leq C  \mathbb{E}_{\mathbb{Q}}[ [\mu(E)]_T^{1/2}].
\]
If $\mu(E)$ is a true martingale, then we can also apply Doob's $L_1$-martingale inequality to get
\[
\mathbb{E}_{\mathbb{Q}}[\sup_{t \in [0,T]} \mu_t(E)] \leq \frac{e}{e-1} (1+ \mathbb{E}_{\mathbb{Q}}[ \mu_T(E) \log \mu_T(E)] ).
\]
\end{remark}

\begin{remark}
In the discrete time setup of Remark \ref{rem:discrete} choosing $\phi(x)=1_{\{x=i\}}$ we can see that for each $t>0$ condition \eqref{eq:martdiscrete} yields
$$\mu_t(i)-\mu_{t-1}(i)=-\langle \phi',\mu_{t-1}\rangle=-(\mu_{t-1}(i)1_{\{i>0\}}-\mu_{t-1}(i+1)).$$
As a result 
the  $\mathbb{R}^{T+1}$ valued increments of $\mu$ can be described via
\[
\mu_t- \mu_{t-1} = \beta\mu_{t-1} +\Delta M_t, 
\]
where $\Delta M_t$ denotes a martingale increment and $\beta$ is given by
\[
\beta=\begin{pmatrix}
0 & 1 & 0 &\ldots& \ldots& 0 \\
0 & -1 & 1 &0 &\ldots& 0\\
\vdots& & \vdots & &\vdots&\\
\vdots& & \vdots & &\vdots&\\
0 & \ldots &&\ldots&-1 & 1\\
0 & \ldots &&\ldots & 0 & -1
\end{pmatrix}.
\]
The process corresponding to \eqref{eq:martdiscrete1} for $\gamma>0$ is similar only with $\beta_{11}=-\gamma$ instead of $0$. In both cases the  matrix $\beta^{\top}$ corresponds to the generator matrix of the spatial motion. In the first case it describes (deterministic) downward jumps by $1$ being absorbed at $0$, while in the second case the absorption at $0$ is followed by a killing with rate $\gamma$. 
\end{remark}

\section{Tractable examples of affine and polynomial type}\label{sec:poly}

This section is dedicated to tractable examples of polynomial and affine type that satisfy the conditions of Theorem~\ref{mainexistence}.
For these specifications the moments as well as the Laplace transform (in the affine case) can be computed explicitly by solving a system of linear PDEs or Riccati PDEs respectively. Throughout we shall  implicitly work under $\mathbb{Q}$, but do not always indicate this in the expected values.
 
We start by providing the definition and a characterization result of polynomial operators given in \cite{CGPS:21}. Their domain is always given by
\begin{align}\label{eq:domain}
\mathcal{D}:=\operatorname{span}\{ P^D(M_+(E)), F^D_0  \}.
\end{align}
instead of $F^D_c$ as in previous sections.

To define polynomial operators we need to introduce the notion of polynomials on $M(E)$. A polynomial of degree $m$ on $M(E)$ is given by i.e.
$$
p(\nu) = \sum_{k=0}^m \langle g_k,\nu^k\rangle,
$$
for some coefficients $g_k\in\widehat C(E^k)$ with $g_m\neq 0$, where
\[
\langle g, \nu^k \rangle := \int_{E^k} g(x_1,\ldots,x_k) \nu(dx_1) \cdots \nu(dx_k).
\]
 The set of all polynomials is denoted by $P$. Recall also the notion of cylindrical polynomials $P^D$ as introduced in Section \ref{sec:cylind}. Note in particular that setting $g_k:=g^{\otimes k}$ we get $\langle g,\nu\rangle^k=\langle g_k,\nu^k\rangle$, showing that $P^D\subseteq P$.

\begin{definition}
A linear operator  $L:\Dcal\to C(M_+(E))$ is called $M_+(E)$-polynomial if
it maps $P^D$ to  $P$  such that for every $p\in P^D$ there is some $q\in P$ whose degree is smaller or equal the degree of $p$ and $q|_{M_+(E)}=Lp|_{M_+(E)}$.
\end{definition}

The form of these operators in a diffusion setup can be characterized completely. For the proof of a corresponding statement see Theorem~4.6 in \cite{CGPS:21}.
\begin{theorem}
A linear operator $L:\mathcal{D}\to C(M_+(E))$ is $M_+(E)$-polynomial and of diffusion-type if and only if $L$ admits a representation
\begin{align}
Lf(\nu) = &B_0(\partial f(\nu)) + \langle B_1(\partial f(\nu)),\nu\rangle\label{eq:op}\\
& + \frac{1}{2} \Big(Q_0(\partial^2 f(\nu)) + \langle Q_1(\partial^2 f(\nu)), \nu\rangle + \langle Q_2(\partial^2 f(\nu)), \nu^2\rangle\Big), \quad f\in \Dcal,  \nu\in M_+(E)\notag
\end{align}
for some linear operators $B_0:D\to\R$, $B_1:D\to  C(E)$, $Q_0:D\otimes D\to\R$, $Q_1:D\otimes D\to C(E)$, $Q_2:D\otimes D\to \widehat C(E^2)$. These operators are uniquely determined by $L$.
\end{theorem}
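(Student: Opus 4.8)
The plan is to reconstruct the five operators by testing $L$ on the rank-one monomials $p_{g,k}(\nu):=\langle g,\nu\rangle^k$, $g\in D$, $k\in\N_0$, which together span $P^D$, and then to propagate the information to the whole domain $\mathcal{D}$ using linearity. For the ``if'' direction one sets $B(\cdot,\nu):=B_0(\cdot)+\langle B_1(\cdot),\nu\rangle$ and $Q(\cdot,\nu):=Q_0(\cdot)+\langle Q_1(\cdot),\nu\rangle+\langle Q_2(\cdot),\nu^2\rangle$, which are linear in their first argument, so \eqref{eq:op} is of diffusion type in the sense of \eqref{eq:diff}. To check that it is $M_+(E)$-polynomial, note that for $p\in P^D$ of degree $\le d$ the maps $\nu\mapsto\partial p(\nu)$ and $\nu\mapsto\partial^2 p(\nu)$ can be written as finite sums $\sum_j P_j(\nu)\,g_j$ and $\sum_\ell R_\ell(\nu)\,h_\ell\otimes h_\ell$ with $g_j,h_\ell\in D$ and polynomials $P_j$, $R_\ell$ of degree $\le d-1$, $\le d-2$ respectively; feeding this into \eqref{eq:op} and using that pairing with $\nu$ raises the polynomial degree by one and pairing with $\nu^2$ by two, one gets $\deg(Lp)\le d$.

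For the ``only if'' direction I would use the diffusion representation \eqref{eq:diff} at $p_{g,k}$. Since $\partial p_{g,k}(\nu)=k\langle g,\nu\rangle^{k-1}g$ and $\partial^2 p_{g,k}(\nu)=k(k-1)\langle g,\nu\rangle^{k-2}\,g\otimes g$, linearity of $B(\cdot,\nu)$ and $Q(\cdot,\nu)$ yields
\begin{align*}
L p_{g,0}&=0,\qquad L p_{g,1}(\nu)=B(g,\nu),\\
L p_{g,2}(\nu)&=2\langle g,\nu\rangle\,B(g,\nu)+Q(g\otimes g,\nu).
\end{align*}
As $L$ is $M_+(E)$-polynomial, $Lp_{g,1}$ agrees on $M_+(E)$ with a polynomial of degree $\le1$, hence $B(g,\nu)=B_0(g)+\langle B_1(g),\nu\rangle$ with $B_0(g)\in\R$ and $B_1(g)\in C(E)$; inserting this, $Q(g\otimes g,\nu)=Lp_{g,2}(\nu)-2\langle g,\nu\rangle B(g,\nu)$ agrees on $M_+(E)$ with a polynomial of degree $\le2$, hence $Q(g\otimes g,\nu)=Q_0(g\otimes g)+\langle Q_1(g\otimes g),\nu\rangle+\langle Q_2(g\otimes g),\nu^2\rangle$ with $Q_0(g\otimes g)\in\R$, $Q_1(g\otimes g)\in C(E)$ and $Q_2(g\otimes g)\in\widehat C(E^2)$ (symmetric, as $\nu^2$ only sees the symmetrisation). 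No further conditions come from $k\ge3$: there $L p_{g,k}(\nu)=k\langle g,\nu\rangle^{k-1}B(g,\nu)+\binom{k}{2}\langle g,\nu\rangle^{k-2}Q(g\otimes g,\nu)$ is automatically a polynomial of degree $\le k$.

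It remains to turn these $g$-wise identities into linear operators and to handle general $f\in\mathcal{D}$. The coefficients of a polynomial on $M(E)$ are uniquely determined already by its restriction to $M_+(E)$ --- one recovers them by evaluating at finite atomic measures $\sum_i t_i\delta_{x_i}$, $t_i\ge0$, and polarising the resulting polynomial in $(t_i)$ --- so $B_0(g),B_1(g),Q_0(g\otimes g),Q_1(g\otimes g),Q_2(g\otimes g)$ are unambiguously defined, and together with the linearity of $B(\cdot,\nu)$ this forces $g\mapsto B_0(g),B_1(g)$ to be linear on $D$. Since $D\otimes D=\operatorname{span}\{g\otimes g:g\in D\}$, writing $Q(h,\nu)=\sum_i c_i\,Q(g_i\otimes g_i,\nu)$ for $h=\sum_i c_i\,g_i\otimes g_i$ exhibits $\nu\mapsto Q(h,\nu)$ as a polynomial of degree $\le2$ whose uniquely determined coefficients depend on $h$ linearly and independently of the chosen representation; this defines $Q_0,Q_1,Q_2$ on all of $D\otimes D$. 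Finally, for $f\in\mathcal{D}$ one has $\partial f(\nu)\in D$ and $\partial^2 f(\nu)\in D\otimes D$, so expanding them into finite combinations of $g$'s and $h\otimes h$'s and using the linearity of $B_0,\dots,Q_2$ turns \eqref{eq:diff} into \eqref{eq:op}; uniqueness of $B_0,\dots,Q_2$ follows by reading them off from $Lp_{g,1}$ and $Lp_{g,2}$ as above.

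The routine computations aside, the main obstacle is precisely this extension step: $L$ is known only on rank-one data, and one must verify that the induced coefficient functions are well defined and linear on all of $D$ and $D\otimes D$. This hinges on the uniqueness of the symmetric coefficients of a polynomial on $M_+(E)$ (a polarisation argument on atomic measures) together with the linearity of $Q(\cdot,\nu)$, which is what makes the values on $\operatorname{span}\{g\otimes g\}=D\otimes D$ consistent irrespective of the representation chosen.
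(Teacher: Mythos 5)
Your argument is correct and follows essentially the same route as the proof the paper points to (the paper itself gives no proof here, deferring to Theorem~4.6 of \cite{CGPS:21}): test $L$ on the rank-one monomials $\langle g,\nu\rangle^k$ for $k=0,1,2$, use the degree bound to identify the affine/quadratic coefficient structure of $B(g,\fdot)$ and $Q(g\otimes g,\fdot)$, and then invoke uniqueness of the symmetric coefficients of a polynomial restricted to $M_+(E)$ together with linearity in the first argument to obtain well-defined linear operators on $D$ and $D\otimes D=\operatorname{span}\{g\otimes g\}$. Your identification of the well-definedness/linearity of $Q_0,Q_1,Q_2$ across different representations of $h\in D\otimes D$ as the key point, resolved by the polarisation-on-atomic-measures uniqueness lemma, matches the structure of the cited proof.
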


We now combine Theorem~\ref{mainexistence} with one of the main results of \cite{CGPS:21}, which gives  sufficient conditions for the existence of solutions to the martingale problem for  polynomial operators.  To state the result we first need the following definition.

\begin{definition}
We say that a linear operator $C$ admits a  $(\beta, \pi)$-representation if
$$
C(g)(x,y)=\frac{1}{2}(\pi(x,y) g(x,x) + \pi(y,x) g(y,y) +2 \beta(x,y) g(x,y)), \quad  g \in D \otimes D
$$
where
\begin{itemize}
\item  $\beta: E^2 \to \mathbb{R}$ is a symmetric function such that $\beta(x,x) \geq 0$ for all $x \in E$;
\item $\pi: E^2 \to \mathbb{R}_+$ is a non-negative function  such that $\pi(x,x)=0$ for all $x \in E$;
\item for all $n\in \mathbb{N}$, $x_1, \ldots, x_n \in E$, $c_1, \ldots, c_n \in \mathbb{R}_{++}$, the $n \times n$ matrix 
\begin{align}\label{eq:matrix}
A^{(n)}:=\beta_n+ \begin{pmatrix} \sum_{j=1}^n \frac{c_j}{c_1} \pi(x_1, x_j) &  &\\
& \ddots &\\
& & \sum_{j=1}^n \frac{c_j}{c_n} \pi(x_n, x_j)
 \end{pmatrix} \in \mathbb{S}^n_+,
\end{align}
where $\beta_n \in \mathbb{S}^n$ with entries $\beta_{n,ij}=\beta(x_i,x_j)$;
\item the map $ (x,y) \mapsto \frac{1}{2}(\pi(x,y) g(x,x) + \pi(y,x) g(y,y)+ 2\beta(x,y)g(x,y))$ lies in $\widehat{C}(E^2)$ for all $g \in D\otimes D$. 

\end{itemize}
\end{definition}

Accordingly, sufficient conditions for the  existence of the martingale problem such that additionally the HJM-condition \eqref{eq:martmeas} is satisfied  can now be formulated as follows. Note that here the local martingale condition \eqref{martprob} holds for all $f \in \mathcal{D}$ with $\mathcal{D}$ defined in \eqref{eq:domain}.

\begin{theorem}\label{main1}
Let $D=D_1$ and $L:\Dcal\to C(M_+(E))$\footnote{Note that $D_1$ enters in the definition of $\mathcal{D}$ given by \eqref{eq:domain}.} be a linear operator of form~\eqref{eq:op}, where
\begin{enumerate}
\item $B_0\equiv 0$;
\item $B_1=-\frac{d}{dx}$;
\item $Q_0 \equiv 0$;
\item $Q_1$ is of the form 
\[
Q_1(g)= \alpha \diag(g), \quad \text{ that is,} \quad  Q_1(g)(x)=\alpha(x)g(x,x), \quad g \in D \otimes D,
\]
where $\alpha \in C(E)$ with values in $\mathbb{R}_+$;
\item $Q_2$ 
admits a ($\beta, \pi$)-representation.
\end{enumerate}
Then $L$ is $M_+(E)$-polynomial and for every initial condition $\nu\in M_+(E)$ its martingale problem has a $M_+(E)$-valued solution  that has continuous paths and satisfies the HJM-drift condition \eqref{eq:martmeas}. Moreover, condition \eqref{eq:finitesup} holds.
\end{theorem}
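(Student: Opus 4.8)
The plan is to verify the hypotheses of Theorem~\ref{mainexistence} (after observing how the polynomial domain $\mathcal D$ relates to $F^{D_1}_c$), invoke the existence result of \cite{CGPS:21} for polynomial operators, and then separately check the integrability condition \eqref{eq:finitesup}. First I would read off from the given data that $L$ has exactly the drift structure demanded by Theorem~\ref{mainexistence}: since $B_0\equiv 0$ and $B_1=-\frac{d}{dx}$, the drift part of \eqref{eq:op} is $\langle B_1(\partial f(\nu)),\nu\rangle=-\langle\frac{d}{dx}\partial f(\nu),\nu\rangle$, which is precisely \eqref{eq:B}. Likewise $Q_0\equiv 0$ kills the constant part of the diffusion, so the diffusion part is $\frac12(\langle Q_1(\partial^2f(\nu)),\nu\rangle+\langle Q_2(\partial^2f(\nu)),\nu^2\rangle)$; I would record this as the candidate $Q$ in the sense of Theorem~\ref{mainexistence}. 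That $L$ is $M_+(E)$-polynomial follows immediately from the characterization theorem, since the operators $B_0,B_1,Q_0,Q_1,Q_2$ of the prescribed forms are linear and map into the stated target spaces (here one uses $\alpha\in C(E)$ and the last bullet of the $(\beta,\pi)$-representation to guarantee $Q_2(g)\in\widehat C(E^2)$, hence $\langle Q_2(\partial^2f(\nu)),\nu^2\rangle\in C(M_+(E))$).

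Next I would establish the positive maximum principle for the diffusion part $Q$ on $M_+(E)$, which is the crucial hypothesis of Theorem~\ref{mainexistence}. Fix $f$ with $\sup f=f(\nu^*)\ge 0$; by \cite[Theorem~3.1]{CGPS:21} one has $\partial_x f(\nu^*)\le 0$ with equality on $\supp(\nu^*)$, and moreover the Hessian $\partial^2 f(\nu^*)$ is negative semidefinite in an appropriate sense on $\supp(\nu^*)$. The term $\langle Q_1(\partial^2 f(\nu^*)),\nu^*\rangle=\int_E\alpha(x)\,\partial^2_{xx}f(\nu^*)\,\nu^*(dx)$ is $\le 0$ because $\alpha\ge 0$ and $\partial^2_{xx}f(\nu^*)\le 0$ for $x\in\supp(\nu^*)$. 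For the quadratic term I would use the $(\beta,\pi)$-representation: writing $g=\partial^2 f(\nu^*)$, the integrand is $\frac12(\pi(x,y)g(x,x)+\pi(y,x)g(y,y)+2\beta(x,y)g(x,y))$; integrating against $\nu^*\otimes\nu^*$ and discretizing $\nu^*$ (or directly using the PSD condition \eqref{eq:matrix} with the $x_i$ ranging over $\supp(\nu^*)$), the $\pi$-part contributes $\sum_i(\sum_j\pi(x_i,x_j))g(x_i,x_i)\,(\cdots)\le 0$ and the combined quadratic form is controlled by $A^{(n)}\in\mathbb S^n_+$ paired with the negative-semidefinite matrix $(g(x_i,x_j))$, yielding $\le 0$. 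This is exactly the calculation underlying the $(\beta,\pi)$-representation in \cite{CGPS:21}, so I would cite it rather than redo it; the condition $B_1=-\frac{d}{dx}$ together with $D=D_1$ (derivatives vanish at $0$) handles the boundary terms at $0$ and $T$ as in the proof of Theorem~\ref{mainexistence}. Applying Theorem~\ref{mainexistence} (or, more precisely, the polynomial existence result of \cite{CGPS:21} with domain $\mathcal D$, which refines it) then produces a continuous $M_+(E)$-valued solution to the martingale problem satisfying \eqref{eq:martmeas}.

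Finally I would verify \eqref{eq:finitesup}. Taking $f(\nu)=\langle 1,\nu\rangle=\nu(E)$ in \eqref{martprob}, the drift vanishes because $B_1(\partial f)=B_1(0\text{-derivative of }1)$—more concretely $\frac{d}{dx}1=0$ so $B_1=-\frac{d}{dx}$ annihilates the constant—and the diffusion part vanishes since $\partial^2 f\equiv 0$; hence $\mu_t(E)$ is a nonnegative local martingale, in particular a supermartingale, so $\mathbb E_{\mathbb Q}[\mu_t(E)]\le \nu(E)$ for all $t$. To upgrade to the supremum bound I would argue that $\mu(E)$ is in fact a true martingale: its quadratic variation is $[\mu(E)]_t=\int_0^t\langle Q_1(1^{\otimes 2}),\mu_s\rangle+\langle Q_2(1^{\otimes 2}),\mu_s^2\rangle\,ds=\int_0^t(\langle\alpha,\mu_s\rangle+\langle \text{(bounded)},\mu_s^2\rangle)\,ds$, and using the supermartingale bound on $\mathbb E[\mu_s(E)]$ together with $\mathbb E[\mu_s(E)^2]$ controlled via the polynomial (second-moment) structure—the second moment $\mathbb E[\langle g_2,\mu_t^2\rangle]$ solves a linear ODE by the polynomial property, hence is finite on $[0,T]$—one gets $\mathbb E[[\mu(E)]_T]<\infty$, so $\mu(E)$ is an $L^2$ martingale. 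Then Doob's $L^2$ (or the BDG) inequality gives $\mathbb E_{\mathbb Q}[\sup_{t\in[0,T]}\mu_t(E)]<\infty$, which is \eqref{eq:finitesup}.

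The main obstacle I anticipate is the positive maximum principle for the full quadratic diffusion part, i.e.\ reconciling the $(\beta,\pi)$-representation with the sign condition on $\langle Q_2(\partial^2 f(\nu^*)),(\nu^*)^2\rangle$ when $\nu^*$ is a general (non-atomic) measure; the clean statement lives in \cite{CGPS:21} and the honest approach is to reduce to the finitely-supported case covered by \eqref{eq:matrix} via approximation, being careful that the negative-semidefiniteness of the "second directional derivative" of $f$ at a maximizer is used in the correct pairing. The integrability step is routine once one commits to extracting finiteness of second moments from the polynomial property.
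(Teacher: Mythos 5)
Your proposal is correct in substance and lands on the same two pillars as the paper's proof: the existence machinery of \cite{CGPS:21} for polynomial operators, and a second-moment bound plus Doob's maximal inequality for \eqref{eq:finitesup}. The route differs in one respect worth noting. The paper does not go through Theorem~\ref{mainexistence} at all: conditions (i)--(v) are already in the exact form required by Theorem~5.9 of \cite{CGPS:21}, so the only thing left to check is that the \emph{spatial} drift operator $B_1=-\frac{d}{dx}$ satisfies the positive minimum principle on $E$ --- a one-line verification in which $D=D_1$ (vanishing derivative at $0$) handles the boundary point $0$ and the sign at $T$ is automatic. No positive maximum principle for the diffusion part on $M_+(E)$ has to be verified by hand, because the $(\beta,\pi)$-representation of $Q_2$ and the nonnegativity of $\alpha$ are precisely the hypotheses of that theorem. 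You instead verify the positive maximum principle for $Q$ on $M_+(E)$ so as to invoke Theorem~\ref{mainexistence}; this is fine as far as it goes, but note that Theorem~\ref{mainexistence} only delivers the martingale property for $f\in F^{D_1}_c$, whereas both the $M_+(E)$-polynomial claim and the HJM condition \eqref{eq:martmeas} require test functions $\langle\phi,\cdot\rangle\in P^{D_1}$, which are not compactly supported. You correctly flag this and fall back on the polynomial existence result with domain $\mathcal D$ --- which is exactly where the paper starts, so your detour through Theorem~\ref{mainexistence} is ultimately redundant rather than wrong. For \eqref{eq:finitesup}, the paper simply cites \cite[Lemma~5.7]{CGPS:21} for finiteness of $\mathbb E[\langle 1,\mu_t\rangle^2]$ and true martingality of $\langle 1,\mu\rangle$, then applies Doob; your carr\'e-du-champ/Gronwall derivation of the same facts is essentially a proof of that lemma, and your caution about the apparent circularity (the moment formula presupposes integrability) is resolved, as you suggest, by localization.
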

The immediate consequence of this theorem and Theorem~\ref{th:main1} is given in the following corollary.
\begin{corollary}
Let $L$ be as in Theorem~\ref{main1} and $(\mu_t)_{t\in[0,T]}$ be a solution of the corresponding martingale problem. Then the corresponding market satisfies NAFLVR in the sense of Assumption~\ref{ass:NAFLVR}. 
\end{corollary}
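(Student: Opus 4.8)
The plan is to chain together the two results already established in the excerpt. The statement we must prove is essentially a composition: Theorem~\ref{main1} produces a process, and Theorem~\ref{th:main1} converts its properties into NAFLVR. So the proof should simply verify that the hypotheses of Theorem~\ref{th:main1} are all met by the solution $(\mu_t)_{t\in[0,T]}$ coming from Theorem~\ref{main1}.

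Concretely, I would proceed as follows. First, by Theorem~\ref{main1}, for the given initial condition $\nu\in M_+(E)$ there exists a $M_+(E)$-valued solution $(\mu_t)_{t\in[0,T]}$ to the martingale problem for $L$ with continuous paths; moreover this theorem explicitly asserts that $(\mu_t)_{t\in[0,T]}$ satisfies the HJM-drift condition \eqref{eq:martmeas}, i.e.\ that $\langle \phi,\mu\rangle+\int_0^{\cdot}\langle\phi',\mu_s\rangle\,ds$ is a $\mathbb{Q}$-local martingale for all $\phi\in D_1$. Second, Theorem~\ref{main1} also states that the integrability condition \eqref{eq:finitesup}, namely $\mathbb{E}_{\mathbb{Q}}[\sup_{t\in[0,T]}\mu_t(E)]<\infty$, holds. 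These are exactly the two standing hypotheses of Theorem~\ref{th:main1} (beyond being an $M_+(E)$-valued process, which is also provided). Therefore Theorem~\ref{th:main1} applies verbatim and yields that the market satisfies NAFLVR in the sense of Assumption~\ref{ass:NAFLVR}.

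The only point requiring a word of care is that Theorem~\ref{th:main1} as stated fixes $\mathbb{Q}\sim\mathbb{P}$ in advance, whereas here the martingale problem is posed directly on a filtered probability space carrying the measure $\mathbb{Q}$; so one interprets $\mathbb{Q}$ itself as the equivalent measure (taking $\mathbb{P}=\mathbb{Q}$, or any $\mathbb{P}\sim\mathbb{Q}$ one wishes to model under), and nothing changes. One should also note in passing that the test-function class in \eqref{eq:martmeas} is $D_1$, which matches the choice $D=D_1$ in Theorem~\ref{main1} and the domain $\mathcal{D}$ built from it, so there is no mismatch in the admissible $\phi$.

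There is no real obstacle here: the corollary is a direct logical consequence of the two preceding theorems, and the ``hard part''—establishing the HJM-drift condition and the integrability bound \eqref{eq:finitesup} for polynomial specifications, and turning the drift condition into martingality of all future prices—has already been carried out in Theorem~\ref{main1} and Theorem~\ref{th:main1} respectively. The proof is thus a one-line invocation, and I would write it as such.

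\begin{proof}
By Theorem~\ref{main1}, the process $(\mu_t)_{t\in[0,T]}$ is an $M_+(E)$-valued process with continuous paths satisfying both the HJM-drift condition \eqref{eq:martmeas} and the integrability condition \eqref{eq:finitesup}. These are precisely the hypotheses of Theorem~\ref{th:main1} (with the measure $\mathbb{Q}$ of the underlying filtered probability space playing the role of the equivalent measure therein). Hence Theorem~\ref{th:main1} applies and the corresponding market satisfies NAFLVR in the sense of Assumption~\ref{ass:NAFLVR}.
\end{proof}
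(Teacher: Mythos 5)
Your proposal is correct and is exactly the argument the paper intends: the corollary is stated there as ``the immediate consequence'' of Theorem~\ref{main1} (which supplies the HJM-drift condition \eqref{eq:martmeas} and the integrability bound \eqref{eq:finitesup}) combined with Theorem~\ref{th:main1}. Your remark about identifying $\mathbb{Q}$ with the equivalent measure is a reasonable clarification but adds nothing beyond the paper's reading.
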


\begin{proof}[Proof of Theorem~\ref{main1}]
 The existence's result follows by Theorem~5.9 in \cite{CGPS:21}, of which we check now the conditions. Observe that for each $\phi\in D_1$, $x\in E$ such that $\phi(x)=\sup_E\phi=0$ it holds
$$B_1\phi(x)=0,\ x\in(0,T),\qquad B_1\phi(T)\leq0,\qquad B_1\phi(0)=0,$$
where the last property follows by definition of $D_1$. This implies that $B_1\phi(x)\leq0$, hence that $B_1$ satisfies the positive minimum principle on $E$, and by Theorem~5.9 in \cite{CGPS:21} we can conclude that $L:\Dcal\to C(M_+(E))$ is $M_+(E)$-polynomial and for every initial condition $\nu\in M_+(E)$ its martingale problem has a $M_+(E)$-valued solution. Since for $f(\nu):=\langle 1,\nu\rangle$ we have $f\in \Dcal$ and $\partial f\equiv 1$,  the HJM-drift condition \eqref{eq:martmeas} follows by the martingale problem.

Finally, by \cite[Lemma~5.7]{CGPS:21} we have $\mathbb{E}[\langle 1,\mu_t\rangle^2] < \infty$ and $\langle 1,\mu\rangle$ is a true martingale. By Doob's maximal inequality condition \eqref{eq:finitesup} follows.
\end{proof}

As a next step, we are now interested in the so-called moment formula to get explicit expressions for the moments. Before stating it we need to associate to an $M_+(E)$-polynomial operator $L$ a family of so-called \emph{dual} operators  $(L_k)_{k \in \mathbb{N}}$. These operators are linear operators mapping the coefficients vector $\vec g:=(g_0, \ldots, g_m) \in \bigoplus_{k=0}^m \in D^{\otimes k}$ of a polynomial of the form $p= \sum_{k=0}^m \langle g_k, \nu^k \rangle$ to the coefficients vector of $Lp$. 
To introduce the dual operator, we now recall Definition~2.3 of \cite{CS:19}.

\begin{definition}
Fix  $m\in \N_0$ and let $L$ be an $M_+(E)$-polynomial operator. 
The \emph{$m$-th dual operator} corresponding to $L$ is a linear operator  $L_m: \bigoplus_{k=0}^m D^{\otimes k}\to
\bigoplus_{k=0}^m \widehat{C} (E^k)$
such that $L_m\vec g =:(L_m^0\vec{g},\ldots, L_m^m\vec{g})$ satisfies
$$
Lp(\nu)= \sum_{k=0}^m \langle L^k_m \vec{g},\nu^k\rangle
\qquad \text{for all } \nu \in M_+(E),
$$
where $p(\nu) = \sum_{k=0}^m \langle g_k,\nu^k\rangle$.
 Whenever $L_m$ is a closable operator\footnote{We refer to \cite[Chapter 1]{EK:09} for the precise definition.}, we still denote its closure  by $L_m: \mathcal{D}(L_m)\to \bigoplus_{k=0}^m \widehat{C} (E^k)$ and its domain by $\mathcal{D}(L_m)\subseteq \bigoplus_{k=0}^m \widehat{C} (E^k)$. 
\end{definition}

\begin{remark}\label{rem:dual}
If $L$ satisfies the conditions of Theorem~\ref{main1}, then the $m$-th dual operator $L_m$ 
when applied to $\vec{g}=(0, \ldots, 0, g^{\otimes m})$
is of the form
\begin{equation}\label{eq:Lm}
\begin{split}
L_m^m(g^{\otimes m})&=
-m g'\otimes g^{\otimes(m-1)}+\frac {m(m-1)}2 Q_2(g\otimes g)\otimes g^{\otimes (m-2)},\\
L_m^{m-1}(g^{\otimes m})&=\frac {m(m-1)}2 Q_1(g\otimes g)\otimes g^{\otimes (m-2)},\\
L_m^k(g^{\otimes m})&=0 \quad k \leq m-2,
\end{split}
\end{equation} 
Identifying $\vec{g}=(0, \ldots, 0, g^{\otimes m})$ with $g^{\otimes m}$, we write here
by a slight abuse of notation  $L_m^k(g^{\otimes m})$ instead of  $L_m^k\vec{g}$. Note that $L^k_m$ applied to general $\vec{g}$ is not $0$ for $k \leq m-2$. Note also that  $L^0_m\equiv 0$.
\end{remark}

Let $L$ satisfy the conditions of Theorem~\ref{main1}. Fix  $m\in \N_0$, and let $L_m$ be the closable $m$-th dual operator corresponding to $L$ with domain $\mathcal{D}(L_m)$.
Before stating the moment formula in the current context,  we extend the domain of $L$ to polynomials with coefficients in $\mathcal{D}(L_m)$ by setting
$$
p_{\vec{g}}=\sum_{k=0}^m \langle g_k,\nu^k\rangle \quad \text{and} \quad Lp_{\vec{g}}=\sum_{k=1}^m \langle L^{k}_m\vec{g},\nu^{k}\rangle,
$$
for all $\vec{g} \in \mathcal{D}(L_m)$ and $m\in \N_0$. We here used that $L^0_m\equiv 0$.
As stated below, the moment formula corresponds to a solution of a system of linear PDEs. Note  that in contrast to \cite{CGPS:21} we shall call these differential equations PDEs since the underlying space is here always $E=[0,T]$.  In the next definition we recall the precise solution concept.

\begin{definition}\label{def:sol}
Let $L_m$ be given by  \eqref{eq:Lm}.
We call  a function $t \mapsto \vec{g}_t$ 
with values in $\mathcal{D}(L_m)$  a solution of the $m+1$ dimensional system of PDEs
\[
\frac d {dt} \vec{g}_t =  L_m \vec{g}_t , \quad \vec{g}_0=(g_{0,0}, g_{0,1}, \ldots, g_{0,m})		
\]
if for every $t >0$ it holds
\begin{align}\label{eq:ODE}
\sum_{k=0}^m \langle g_{t,k},\nu^k\rangle=\sum_{k=0}^m \langle g_{0,k},\nu^k\rangle +\int_0^t  \sum_{k=1}^m \langle L^{k}_m \vec{g}_s, \nu^{k}  \rangle  ds
\end{align}
for all $\nu \in M_+(E)$.
\end{definition}

\begin{remark}
Note that this solution concept reduces to a more classical one if we take $\nu=\delta_{x_1} + \cdots +\delta_{x_k}$ with $x_i \in E$, $i=1, \ldots, k$ and $k =1, \ldots, m$. Indeed, by polarization \eqref{eq:ODE} can be transformed into
\begin{align*}
g_{t,0}&=g_{0,0}\\
g_{t,1}(x_1)&=g_{0,1}(x_1)+ \int_0^t L^1_m \vec{g}_s(x_1) ds\\
&\vdots\\
g_{t,m-1}(x_1, \ldots, x_{m-1})&=  g_{0,m-1}(x_1, \ldots, x_{m-1}) +\int_0^t L^m_{m-1} \vec{g}_s(x_1, \ldots, x_{m-1}) ds\\
g_{t,m}(x_1, \ldots, x_m)&=  g_{0,m}(x_1, \ldots, x_m) +\int_0^t L^m_m \vec{g}_s(x_1, \ldots, x_m) ds
\end{align*}
and thus reduces to a classical (except of the integral form) solution of a multivariate PDE. 
\end{remark}

\begin{theorem}[Dual moment formula]
Let $L$ satisfy the conditions of Theorem~\ref{main1}. Fix  $m\in \N_0$, let $L_m$ be the  $m$-th dual operator corresponding to $L$ as of Remark \ref{rem:dual}, and assume that $L_m$ is closable with domain $\mathcal{D}(L_m)$. 
Fix a coefficients vector $\vec{g}=(g_0, \ldots, g_m)\in\mathcal{D}(L_m)$ and  suppose that the following condition holds true.
\begin{itemize}
\item 
There is a solution  in the sense of Definition~\ref{def:sol} of the $m+1$ dimensional system  of linear ODEs on $[0,T]$ given by
\begin{equation}\label{ODE}
\frac d {dt} \vec{g}_t =  L_m \vec{g}_t, \qquad 
\vec{g}_0=(g_{0,0}, g_{0,1}, \ldots, g_{0,m}).		\end{equation}
\end{itemize}
Then, for any solution $(\mu_t)_{t \in [0,T]}$ to the martingale problem for $L$ and all $0 \leq s \leq t\leq T$ the representation
\begin{equation}\label{eq:momentformula}
\mathbb{E}\left[  \sum_{k=0}^m \langle g_{0,k}, \mu_t^k \rangle \,\bigg|\, \mathcal{F}_s \right ]= \sum_{k=0}^m \langle g_{t-s,k}, \mu_s^k \rangle
\end{equation}
 holds almost surely. 
\end{theorem}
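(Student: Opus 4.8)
The plan is to prove the moment formula by applying the martingale problem to the time-dependent polynomial $(s,\nu)\mapsto p_{\vec g_{t-s}}(\nu)=\sum_{k=0}^m\langle g_{t-s,k},\nu^k\rangle$ and showing that it generates a martingale on $[0,t]$ whose terminal value is $\sum_k\langle g_{0,k},\mu_t^k\rangle$; conditioning on $\mathcal F_s$ then yields \eqref{eq:momentformula}. First I would fix $t\in[0,T]$ and, for $s\in[0,t]$, consider the process $Y_s:=p_{\vec g_{t-s}}(\mu_s)$. The key identity to establish is that
\[
Y_s - Y_0 - \int_0^s\Big(L p_{\vec g_{t-r}}(\mu_r) - \tfrac{d}{dr}\big|_{r}\, p_{\vec g_{t-r}}(\mu_r)\Big)\,dr
\]
is a local martingale, and that the integrand vanishes identically because $\frac{d}{dr}\vec g_{t-r}=-L_m\vec g_{t-r}$ by \eqref{ODE}, which by the definition of the dual operator is exactly $-L$ applied (in the extended sense) to $p_{\vec g_{t-r}}$. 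Concretely, one differentiates $r\mapsto \langle g_{t-r,k},\mu_r^k\rangle$: the $\mu_r$-dependence contributes $Lp_{\vec g_{t-r}}(\mu_r)\,dr$ plus a local martingale via the martingale problem \eqref{martprob} applied to the polynomials $p_{\vec g}\in\mathcal D$ (extended to coefficients in $\mathcal D(L_m)$ as the theorem already did), while the coefficient-dependence contributes $\sum_k\langle \frac{d}{dr}g_{t-r,k},\mu_r^k\rangle\,dr = -\sum_k\langle L_m^k\vec g_{t-r},\mu_r^k\rangle\,dr = -Lp_{\vec g_{t-r}}(\mu_r)\,dr$ by Definition~\ref{def:sol}, and the two drift contributions cancel.

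The second step is to upgrade from local martingale to true martingale so that conditioning is justified. Here I would use the polynomial structure: by Theorem~\ref{main1} and \cite[Lemma~5.7]{CGPS:21} we have $\mathbb E[\langle 1,\mu_t\rangle^2]<\infty$ and $\langle 1,\mu\rangle$ is a true martingale, hence by Doob's inequality $\mathbb E[\sup_{t\le T}\mu_t(E)^2]<\infty$. Since $|\langle g_{s,k},\mu_r^k\rangle|\le \|g_{s,k}\|\,\mu_r(E)^k$, and the coefficient maps $s\mapsto g_{s,k}$ are continuous (hence bounded in supremum norm on $[0,T]$), we need a bound on $\mathbb E[\sup_{r\le T}\mu_r(E)^m]$. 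For general $m$ this requires controlling higher moments; I would invoke the moment estimates available for polynomial processes — either directly from the boundedness of $\sup_{r}\mu_r(E)$ combined with an iterated application of the $m$-th dual system (whose leading term $L_m^m$ has the dissipative structure $-mg'\otimes g^{\otimes(m-1)}+\tfrac{m(m-1)}2 Q_2(g\otimes g)\otimes g^{\otimes(m-2)}$), or from a Gronwall argument on $t\mapsto\mathbb E[\langle 1,\mu_t^m\rangle]$ using that $L$ maps degree-$m$ polynomials to degree-$\le m$ ones. This gives uniform integrability of the local martingale on $[0,t]$, so it is a true martingale, and taking $\mathbb E[\,\cdot\mid\mathcal F_s]$ of the terminal value $Y_t=p_{\vec g_0}(\mu_t)$ against $Y_s=p_{\vec g_{t-s}}(\mu_s)$ delivers \eqref{eq:momentformula}.

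The main obstacle I anticipate is the rigorous justification of the drift cancellation together with the integrability/true-martingale upgrade when the coefficients $\vec g_t$ live only in $\mathcal D(L_m)$ rather than in the nicer subspace $\bigoplus_k D^{\otimes k}$. One must check that the extension of $L$ to $p_{\vec g}$ with $\vec g\in\mathcal D(L_m)$ is compatible with the martingale problem — i.e. that \eqref{martprob} still holds for these extended polynomials, which typically follows by approximating $\vec g_t$ in the graph norm of $L_m$ by coefficient vectors in $\bigoplus_k D^{\otimes k}$, using closability of $L_m$, and passing to the limit in the martingale relation with the uniform moment bounds controlling the convergence. A secondary technical point is interchanging differentiation in $r$ with the integral form of the solution in Definition~\ref{def:sol}; since that definition is already phrased in integral (mild) form, I would work throughout with the integrated identity and avoid pointwise-in-$r$ differentiation, which sidesteps regularity issues. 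Everything else — the product/Itô computation, the pathwise continuity, dominated convergence — is routine given the polynomial framework of \cite{CGPS:21}.
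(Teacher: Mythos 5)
Your proposal is correct and follows exactly the standard duality argument for polynomial processes: the paper itself states this theorem without proof (deferring implicitly to \cite{CS:19} and \cite{CGPS:21}, where the proof proceeds by showing that $s\mapsto p_{\vec g_{t-s}}(\mu_s)$ is a martingale on $[0,t]$ via the cancellation of the $L$-drift against the dual ODE, followed by a moment-bound upgrade from local to true martingale). You also correctly identify the two genuine technical points — extending the martingale problem to coefficients in $\mathcal D(L_m)$ via graph-norm approximation, and justifying the time-dependent test function step in integrated (Fubini) form — so nothing essential is missing.
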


\begin{remark}\label{rem:inhomogen}
Let us remark that the current setup can also be generalized to \emph{time-inhomogenous measure-valued polynomial processes} which have been considered in finite dimensions in \cite{HS:20} (see also \cite{F:05} for the affine case). This then allows in particular to incorporate seasonalities in a tractable manner.
\end{remark}

\begin{example}
Since $L_1^1g=-g'$ and $L_1^0=L_0^0=0$ the corresponding system of PDEs reads
\[
\frac d {dt} \binom{g_{t,0}}{g_{t,1}} =  \binom{0}{-g'_{t,1}}  , \quad \vec{g}_0=(g_{0,0}, g_{0,1}).
\]
Choosing $g_{0,0}=0$ and $g_{0,1}=\phi$, we see that the corresponding solution is given by $\vec g_t=(0,\phi(\fdot -t))^\top$. From the moment formula we can thus conclude that
$$
\mathbb{E}\left[  \langle \phi, \mu_t \rangle \,\big|\, \mathcal{F}_s \right ]=  \langle \phi(\fdot-(t-s)), \mu_s \rangle=\langle \phi, \nu_{t,s} \rangle,
$$
where $\nu_{t,s}([a,b]):=\mu_s([a+t-s,b+t-s])$. Setting $t=T$ and $s=0$, we obtain $\nu_{T,0}$ is supported on $\{0\}$. This implies that $
\mathbb{E}\left[  \langle \phi, \mu_T \rangle  \right ]=0$ for each $\phi$ such that $\phi(0)=0$ and thus that $\mu_T$ is almost surely supported on $\{0\}$, which is also in line with Remark \ref{rem:supp}.

Note, that since we only consider $L_1$, $Q$ does not play a role for this result. This in particular implies that this representations holds for any process $(\mu_t)_{t\in[0,T]}$ satisfying the assumption of Theorem~\ref{th:main1}.
\end{example}

In the following sections we shall give important examples when the system  of linear ODEs  \eqref{ODE} has a solution in the sense of Definition~\ref{def:sol}. Recall that in this case the moment formula can be applied.

\subsection{Black-Scholes type measure-valued HJM-models}
We start by considering the case where the diffusion part corresponds to an analog of a multivariate Black-Scholes model.
\begin{definition}\label{def:BStype}
A \emph{Black-Scholes type measure-valued  HJM-model} is the solution to the martingale problem for some $L:\Dcal \to  C(M_+(E))$ for some $L$ admitting the representation
$$
Lf(\nu) =  \langle -\frac d {dx}(\partial f(\nu)),\nu\rangle
 + \frac{1}{2}  \langle Q_2(\partial^2 f(\nu)), \nu^2\rangle,
$$
for each $f\in \Dcal$,  $\nu\in M_+(E)$,
 where $Q_2:D\otimes D\to \widehat C(E^2)$ admits a ($\beta, \pi$)-representation for some $\beta,\pi$ bounded and continuous on $E^2 \setminus \{x = y\}$.
\end{definition}
Observe that the operator $L$ corresponding to a Black-Scholes type measure-valued  HJM-model satisfies the conditions of Theorem~\ref{main1} for $Q_1 \equiv 0$ and $Q_2$ as in the definition.

\begin{proposition}\label{prop:BStype}
Let $(\mu_t)_{t \in [0,T]}$ be a Black-Scholes type measure-valued HJM-model. Then the moment formula \eqref{eq:momentformula} holds true. 
\end{proposition}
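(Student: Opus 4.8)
The plan is to peel the statement back to the one nontrivial hypothesis of the Dual moment formula theorem. By the remark following Definition~\ref{def:BStype}, the generator $L$ of a Black--Scholes type model satisfies all the assumptions of Theorem~\ref{main1} with $Q_1\equiv 0$ and $Q_2$ as prescribed, so $L$ is $M_+(E)$-polynomial, its dual operators are of the form \eqref{eq:Lm}, and each $L_m$ --- being a first--order differential (transport) operator perturbed by a bounded operator (see below) --- is closable; denote its closure again by $L_m$, with domain $\mathcal D(L_m)$. Hence, in order to invoke the Dual moment formula theorem and obtain \eqref{eq:momentformula}, it remains only to show that for every coefficient vector $\vec g=(g_0,\dots,g_m)\in\mathcal D(L_m)$ the dual system $\tfrac{d}{dt}\vec g_t=L_m\vec g_t$, $\vec g_0=\vec g$, admits a solution in the sense of Definition~\ref{def:sol}. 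The rest of the proof constructs it.

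The crucial point is that, since $B_0=Q_0=Q_1\equiv 0$, inspection of \eqref{eq:op} shows that $L$ preserves the degree of polynomials; equivalently, by \eqref{eq:Lm}, the $j$-th coordinate $L_m^j\vec g$ depends only on $g_j$ and coincides with $L_j^j g_j$. The dual system therefore \emph{decouples} into the $m+1$ independent linear evolution equations $\tfrac{d}{dt}g_{t,k}=L_k^k g_{t,k}$, $k=0,\dots,m$, each posed on $\widehat C(E^k)$. For $k=0$ one has $L_0^0=0$, so $g_{t,0}\equiv g_{0,0}$; for $k=1$, $L_1^1=-\tfrac{d}{dx}$ on $D_1$, solved explicitly by $g_{t,1}(x)=g_{0,1}((x-t)^+)$. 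For $k\ge 2$ I would write $L_k^k=A_k+C_k$ on the core $D_1^{\otimes k}$, where $A_k h=-\sum_{i=1}^k\partial_{x_i}h$ generates the Feller semigroup $S_t^{(k)}h(x_1,\dots,x_k)=h((x_1-t)^+,\dots,(x_k-t)^+)$ of $k$ independent left--drifting particles absorbed at $0$, and $C_k=\sum_{1\le i<j\le k}C_{ij}$ collects the interaction terms produced by the $(\beta,\pi)$-representation of $Q_2$ (a coalescence of one particle onto another, together with the $\beta$-potential). Under the Black--Scholes assumption that $\beta,\pi$ are bounded, each $C_{ij}$, hence $C_k$, is a bounded operator (with norm controlled by $\|\beta\|_\infty+\|\pi\|_\infty$); thus $L_k^k$ is a bounded perturbation of the generator $A_k$ and itself generates a strongly continuous semigroup $(P_t^{(k)})_{t\ge0}$, or, equivalently, the Duhamel equation $g_{t,k}=S_t^{(k)}g_{0,k}+\int_0^t S_{t-s}^{(k)}C_k g_{s,k}\,ds$ has a unique solution obtained by Picard iteration. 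Setting $\vec g_t:=(P_t^{(0)}g_{0,0},\dots,P_t^{(m)}g_{0,m})$ and pairing against an arbitrary $\nu\in M_+(E)$ then yields the integrated identity \eqref{eq:ODE}, which is exactly what the Dual moment formula theorem requires.

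The genuine obstacle is the interaction part: $C_{ij}$ evaluates its argument on the collision manifold $\{x_i=x_j\}$, on which $\pi$ is only assumed continuous from outside (and vanishes), so some care is needed in choosing the Banach space on which $(P_t^{(k)})$ acts and in verifying that its orbits stay inside $\mathcal D(L_m)$. I would handle this either by working on the space of bounded functions on $E^k$ that are continuous off $\bigcup_{i<j}\{x_i=x_j\}$ --- a space left invariant by $S_t^{(k)}$ (the collision manifolds being invariant under the simultaneous leftward shift) and by $C_k$ --- or, more robustly, by realising $(P_t^{(k)})$ probabilistically via the natural $k$-particle coalescent--with--killing process on $E^k$ together with a Feynman--Kac representation; the technical crux is then to check that this semigroup is generated by the closure of $L_k^k|_{D_1^{\otimes k}}$ and that $D_1^{\otimes k}$ is a core for it. Everything else --- closability of $L_m$, boundedness of $C_k$, the Picard/Duhamel construction, and deducing \eqref{eq:ODE} from the per--level equations --- is routine.
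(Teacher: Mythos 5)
Your proposal is correct and follows essentially the same route as the paper: the dual system decouples (since $Q_1\equiv 0$), each $L_k^k$ is written as the generator of the absorbed shift semigroup $g\mapsto g((\fdot-t)^+)$ plus a bounded perturbation coming from the $(\beta,\pi)$-representation of $Q_2$, and the bounded-perturbation theorem (\cite[Corollary 1.7.2]{EK:09} in the paper) yields a strongly continuous semigroup whose orbits solve the dual system in the sense of Definition~\ref{def:sol}. Your worry about the collision manifold is already absorbed into the standing assumption that $(x,y)\mapsto \frac12(\pi(x,y)g(x,x)+\pi(y,x)g(y,y)+2\beta(x,y)g(x,y))$ lies in $\widehat C(E^2)$, so the perturbation maps into $\widehat C(E^m)$ and no separate function space is needed.
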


\begin{proof}

Let $L_m$ be the  $m$-th dual operator corresponding to $L$.
Note that when applying it to $\vec{g}=(0, \ldots, 0, g^{\otimes m})$ with $g \in D_1$, we have
\begin{equation}\label{eq:Lm1}
\begin{split}
L^m_m(g^{\otimes m})&=-m g' \otimes g^{\otimes m-1} + \frac{m (m-1)}{2} Q_2(g \otimes g) \otimes g^{\otimes m-2},\\
L^k_m(g^{\otimes m})&=0, \quad k \leq m-1,
\end{split}
\end{equation}
where we write here 
by a slight abuse of notation, similarly as in Remark \ref{rem:dual},  $L_m^k(g^{\otimes m})$ instead of  $L_m^k\vec{g}$.

Observe now that $B^m(g^{\otimes m}):=-m g' \otimes g^{\otimes m-1}=-1^\top \nabla(g^{\otimes m})$ generates the strongly continuous contraction shift semigroup $(P^m_t)_{t \in [0,T]}$ on $\widehat{C}(E^m)$  given by 
\begin{align*}
P^m_tg(x)=
g\big((x_1-t)^+,\ldots,(x_m-t)^+\big), 
\end{align*}
for each $g\in \widehat C(E^m)$.
Moreover, by the form of $Q_2$ the closure of the operator $Q^m(g^{\otimes}) :=  \frac{m (m-1)}{2} Q_2(g \otimes g) \otimes g^{\otimes m-2}$ is a bounded operator on $\widehat{C}(E^m)$. By \cite[Corollary 1.7.2]{EK:09}, $\overline{L}^m_m$ therefore generates a strongly continuous semigroup on $\widehat{C}(E^m)$ which is additionally positive since $Q^m$ (and also $B^m$) satisfies the positive minimum principle (see the end of Section \ref{sec:notation}). Denote  now, for every $k \leq m$, the strongly continuous positive semigroup corresponding to $\overline{L}^k_k$ by
$(Y^{k}_t)_{t\in [0,T]}$ which acts on $\widehat{C}(E^k)$. Then, by the definition of this semigroup 
 \[
 \vec g_t:=(\underbrace{Y_t^0}_{\text{Id}}g_{0,0}, \ldots,Y_t^mg_{0,m}) 
 \] 
 satisfies \eqref{ODE} for any initial value $\vec{g}_0=(g_{0,0}, g_{0,1}, \ldots, g_{0,m}) \in  \mathcal{D}(L_m)$. Hence, the moment formula \eqref{eq:momentformula} holds true.
\end{proof}

\begin{remark}
Observe  that for each $g\in D_1^{\otimes m}$ it holds
$$
    L^m_mg(x)=\Gcal_mg(x)+\gamma(x)g(x),
$$
where $\gamma(x)=\sum_{i=1}^m\sum_{j=1}^m\big(\pi(x_i,x_j)+\beta(x_i,x_j)\big)$ and
$$\Gcal_mg(x):=-1^\top\nabla g(x) + \frac{1}{2} \sum_{i=1}^m\sum_{j=1}^m \pi(x_i,x_j)\big(g(x+\delta_{ij}(x))-g(x)\big)$$
for $\delta_{ij}(x)=(x_i-x_j)e_j$. Observe that $\Gcal_m$ is the generator of the $\R^m$-valued process $Z$ that can be described as follows. Each component of $Z$ can jump to the position of one of the other components. The stochastic intensity for a jump of the $j$-th component to the position of the $i$-th one is given by $\big(\frac 1 2\pi(Z^{i}_t,Z^{j}_t)\big)_{t\in[0,T]}$. Between two jumps each component of $Z$ decreases linearly with slope -1. Since the derivative of each test function vanishes at 0 we can require the process to be absorbed in 0. This means that given two consecutive jump times $\tau_1,\tau_2$ it holds
$$Z^{i}_{\tau_1+t}=(Z^{i}_{\tau_1}-t)^+,\qquad t<\tau_2-\tau_1.$$

Following a Feynman-Kac type of argument one can then show that the PDE  
$$\frac d {dt}\langle g_t,\nu\rangle=\langle L_m^m g_t,\nu\rangle,\qquad g_0=g,$$
corresponding to $L_m^m$ is solved by
$$\langle g_t,\nu\rangle:=\E[e^{-\int_0^t\gamma(Z_s)ds}g(Z_t)|Z_0\sim \nu].$$
By the moment formula we can thus conclude that
$$
\mathbb{E}\left[  \langle g, \mu_t^m \rangle \,\big|\, \mathcal{F}_s \right ]
= \langle g_{t-s}, \mu_s^m \rangle
=\E[e^{-\int_0^{t-s}\gamma(Z_u)du}g(Z_{t-s})|Z_0^1,\ldots Z_0^m \text{ iid},\  Z_0^i\sim \mu_s],
$$
for each $g\in D_1^{\otimes m}$.
\end{remark}

\begin{remark}
The choice $\pi \equiv 0$ in the $(\beta, \pi)$-representation of $Q_2$ can be seen as measure-valued analog of the quadratic variation structure of a multivariate Black-Scholes model. This  was also the inspiration for the name of  the whole model class introduced in Definition~\ref{def:BStype}. Observe that the form of $L$ then simplifies to
\begin{align}\label{eq:L-BS}
Lf(\nu)= -\langle \frac{d}{dx} \partial f(\nu), \nu \rangle + \frac{1}{2} \langle \beta \partial^2 f(\nu), \nu^2 \rangle.
\end{align}
Note that in this case the function $\beta$  has to be a positive semi-definite kernel to guarantee that \eqref{eq:matrix} is satisfied.  It can then be interpreted as the covariance structure between the maturities. To see this, we compute  the carr\'e du champ operator associated with $L$, which is defined as the following symmetric bilinear map 
$$\Gamma(f,h)=L(fh)-fLh-hLf \; \; \; \; \; \; \; \;f,h\in \mathcal{D}.$$
This operator then yields a formula for the quadratic variation process of the martingale $N^f$ given by \eqref{martprob}.
Indeed, denote the quadratic variation process of $N^f$ by $\langle\langle N^f \rangle\rangle$. Then, e.g.~by adapting \cite[Proposition VIII.3.3]{RY:99} to the current setting we have
\begin{equation}\label{eq:quadraticvariationforward}
\langle\langle N^f \rangle\rangle_t=\int_0^t \Gamma(f,f)(\mu_s)ds,
\end{equation}
where $(\mu_t)_{t \in [0,T]}$ denotes the solution to the martingale problem for $L$ as of \eqref{eq:L-BS}.
A straightforward calculation then yields for
 $f\in \mathcal{D}$ of the form $f(\nu)=\Psi(\langle  \nu,g_1\rangle, \ldots,\langle  \nu,g_m\rangle)$ with $g_i \in D_1, \Psi \in C^\infty(\mathbb{R}^m)$ for some $m\in\mathbb{N}_0$,
$$\Gamma(f,f)(\nu)=\sum_{i,j=1}^m
\partial_i \Psi(\langle g_1,\nu\rangle,\ldots,\langle g_m,\nu\rangle)\partial_j \Psi(\langle g_1,\nu\rangle,\ldots,\langle g_m,\nu\rangle)\langle \beta g_i\otimes g_j,\nu^2 \rangle.$$
Hence, 
\begin{align}\label{eq:quadraticvar}
\langle\langle N^f \rangle\rangle_t=\int_0^t \sum_{i,j=1}^m
\partial_i \Psi(\langle g_1,\nu\rangle,\ldots,\langle g_m,\nu\rangle)\partial_j \Psi(\langle g_1,\nu\rangle,\ldots,\langle g_m,\nu\rangle)\langle \beta g_i\otimes g_j,\mu_s^2 \rangle ds.
\end{align}

Comparing this with the classical form of a multivariate Black-Scholes model of dimension $d$ 
\[
dS_t= \text{drift } dt + \diag(S_t) \sigma dW_t,
\]
where $\sigma  \in \mathbb{R}^{d \times d}$, $W$ is a $d$-dimensional Brownian motion and where
we deliberately do not specify the drift part, we see e.g.~by Ito's formula or \eqref{eq:quadraticvariationforward}  applied to the current situation that the quadratic variation of 
\[
h(S_t)- h(S_0)- \int_0^t \mathcal{A} h(S_s) ds, \quad h \in C^{\infty}(\mathbb{R}^d)
\]
is given by 
\[
\int_0^t \sum_{i,j}^d \partial_i h(S_s) \partial_j h(S_s) S_{s,i} S_{s,j} \sigma^2_{ij} ds.
\]
This is exactly the same form as \eqref{eq:quadraticvar}, which can be seen by setting $m=d$, identifying $\Psi$ with $h$, inserting $g_i= \delta_i$, noticing that $S$ takes the role of $\mu$ and replacing $\beta$ by $\sigma^2$.
\end{remark}

\subsection{Affine measure-valued HJM-models}
Another tractable subclass of polynomial processes are \emph{affine type} models, corresponding to Dawson-Watanabe-type superprocesses. There the focus will lie on the expression of the Laplace transform given in terms of one-dimensional non-linear PDEs of Riccati type.
For this affine subclass it will turn out that $Q_2 \equiv 0$. To introduce these models, 
we first recall the definition of affine type operators in sense of Definition~6.1 in \cite{CGPS:21}. Such definition is inspired by the classical form of the infinitesimal generator of affine processes on finite dimensional state spaces, see, e.g.,  \cite{KW:71, DFS:03, CFMT:11, CKMT:16}. Again, recall that the set $\Dcal$ is given by
$$
\mathcal{D}:=\operatorname{span}\{ P^D(M_+(E)), F^D_0  \}.
$$

\begin{definition}
We say that a linear operator  $L:\Dcal\to C(M_+(E))$ is of \emph{affine type} on $M_+(E)$ if there exist maps $F: D \to \mathbb{R}$ and $R: D \to C(E)$
such that 
\[
L \exp(\langle g, \fdot \rangle )(\nu)= (F(g) + \langle R(g), \nu \rangle) \exp(\langle g, \nu \rangle )
\]
for all $g \in D_-$ and $\nu \in M_+(E)$, where $D_-$ denotes all function in $D$ with values in $\mathbb{R}_-$.
\end{definition}

The following theorem now gives the precise form of  
affine type operators such that  the HJM-condition is satisfied \eqref{eq:martmeas}. It also states the well-posedness of the associated martingale problem with unique solution $(\mu_t)_{t \in [0,T]}$ and the exponential linear form of its Laplace transform which can be computed by solving a Riccati PDE.
In its formulation we use the notation $x^+:=\max\{x,0\}$.

\begin{theorem}\label{mainaffine}
Let $D=D_1$ and $L:\Dcal\to C(M_+(E))$ be the linear operator of form~\eqref{eq:op} given by
\begin{align}\label{eq:affine}
Lf(\nu)=-\langle \frac{d}{dx} \partial f (\nu), \nu \rangle + \frac{1}{2}  \langle\alpha \diag(\partial^2 f (\nu)), \nu \rangle,
\end{align}
for some $ \alpha \in C(E)$ taking values in $\mathbb{R}_+$.
Then $L$ is of affine type, its martingale problem
is well-posed and the   $M_+(E)$-valued solution $(\mu_t)_{t \in [0,T]}$ has continuous paths, satisfies the HJM-drift condition \eqref{eq:martmeas}. Moreover, the market corresponding to $(\mu_t)_{t\in[0,T]}$ satisfies NAFLVR in the
sense of Assumption~\ref{ass:NAFLVR}. 
Finally, the Riccati PDE
\[
\frac d {dt} \psi_t= -\frac{d}{dx} \psi_t + \frac{1}{2} \alpha \psi_t^2, \quad \psi_0 = g \in (D_1)_-
\]
admits the $(D_1)_-$-valued solution 
$$\psi_t(x)=\frac {g\big((x-t)^+\big)}{1-g\big((x-t)^+\big)\int_0^t\frac 1 2 \alpha\big((x-s)^+\big)ds}$$
and for all $0\leq s\leq t\leq T$ the conditional Laplace transform  has almost surely the following representation
\begin{equation}\label{eq:Laplacetrans}
\mathbb{E}\left[\exp( \langle g, \mu_t \rangle) \,\big|\, \mathcal{F}_s \right ]= \exp( \langle \psi_{t-s}, \mu_s \rangle), \quad  g \in (D_1)_-.
\end{equation}
\end{theorem}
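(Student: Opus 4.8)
The plan is to dispatch the assertions in the order they are stated. \emph{Affine type} is a direct computation: for $f(\nu)=\exp(\langle g,\nu\rangle)$ one has $\partial_xf(\nu)=g(x)f(\nu)$ and $\partial^2_{xy}f(\nu)=g(x)g(y)f(\nu)$, so $\diag(\partial^2f(\nu))(x)=g(x)^2f(\nu)$, and substituting into \eqref{eq:affine} yields
\[
Lf(\nu)=\big\langle -\tfrac{d}{dx}g+\tfrac12\alpha g^2,\,\nu\big\rangle\,f(\nu),
\]
i.e.\ $L$ is of affine type with $F\equiv0$ and $R(g)=-g'+\tfrac12\alpha g^2\in C(E)$. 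Note that $L$ acts on such an exponential only through $R(g)$, so this identity persists for any $g\in C^1(E)$ with $g'(0)=0$; this slightly larger admissible class will accommodate the Riccati solutions below. \emph{Existence, the HJM-drift condition, and NAFLVR} come from Theorem~\ref{main1}, applied with $B_0\equiv0$, $B_1=-\tfrac{d}{dx}$, $Q_0\equiv0$, $Q_1(g)=\alpha\diag(g)$, and $Q_2\equiv0$ realised by the trivial $(\beta,\pi)$-representation $\beta\equiv\pi\equiv0$ (its defining conditions hold vacuously, \eqref{eq:matrix} reducing to $0\in\mathbb{S}^n_+$). Its hypotheses hold since $\alpha\in C(E)$ is $\mathbb{R}_+$-valued, so it produces a continuous $M_+(E)$-valued solution $(\mu_t)_{t\in[0,T]}$ satisfying \eqref{eq:martmeas} and \eqref{eq:finitesup}; NAFLVR then follows from Theorem~\ref{th:main1} (equivalently from the Corollary following Theorem~\ref{main1}).

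For the \emph{Riccati PDE} I would verify the closed form directly. Setting $a(t,x):=g((x-t)^+)$ and $b(t,x):=\int_0^t\tfrac12\alpha((x-s)^+)\,ds$, the transport operator $D:=\partial_t+\partial_x$ satisfies $Da=0$ (the condition $g'(0)=0$ makes $a$ of class $C^1$) and $Db=\tfrac12\alpha$, so $\psi_t:=a/(1-ab)$ obeys $D\psi_t=\tfrac12\alpha\psi_t^2$, i.e.\ $\partial_t\psi_t=-\partial_x\psi_t+\tfrac12\alpha\psi_t^2$, with $\psi_0=g$ on $E$. Since $g\le0\le\alpha$ the denominator is $\ge1$, so $\psi_t\le0$, and $\partial_x\psi_t(0)=0$; hence $\psi_t\in(D_1)_-$ when $\alpha\in C^\infty(E)$, and in general $\psi_t$ is $C^1$ and lies in the admissible class above. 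For the \emph{Laplace transform}, let $(\mu_t)_{t\in[0,T]}$ be \emph{any} solution of the martingale problem for $L$, fix $g\in(D_1)_-$ and $\theta\in(0,T]$, and put $M_t:=\exp(\langle\psi_{\theta-t},\mu_t\rangle)$ for $t\in[0,\theta]$, so $0<M_t\le1$. I would show $M$ is a martingale by localizing along $\tau_k:=\inf\{t:\mu_t(E)\ge k\}$: on $[0,\tau_k]$ the process $\mu$ remains in the weak-$*$-compact set $K_k:=\{\nu:\nu(E)\le k\}$, on a neighbourhood of which $\exp(\langle\psi_s,\fdot\rangle)$ coincides with some $\widetilde f_s\in F^{D_1}_c\subseteq\Dcal$; the time-dependent Dynkin formula for $t\mapsto\widetilde f_{\theta-t}(\mu_{t\wedge\tau_k})$, combined with $\tfrac{d}{ds}\psi_s=R(\psi_s)$, $L\widetilde f_s=\langle R(\psi_s),\fdot\rangle\exp(\langle\psi_s,\fdot\rangle)$ on $K_k$, and $F\equiv0$, makes the finite-variation part of $M^{\tau_k}$ vanish identically. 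Since $\langle1,\mu\rangle$ is a true martingale (by \eqref{eq:martmeas} and \eqref{eq:finitesup}), $\tau_k\to T$ a.s., so $M$ is a bounded local, hence true, martingale; conditioning on $\mathcal{F}_s$ then gives \eqref{eq:Laplacetrans}.

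It remains to argue \emph{well-posedness}. Existence is already in hand; for uniqueness, \eqref{eq:Laplacetrans} with $s=0$ expresses $\mathbb{E}[\exp(\langle g,\mu_t\rangle)]$ as a function of the initial measure for every $g\in(D_1)_-$ and every initial condition. Since $D_1$ is dense in $C(E)$, the maps $\nu\mapsto\langle g,\nu\rangle$ separate points of $M_+(E)$, and a Laplace-transform/monotone-class argument upgrades this to uniqueness of all one-dimensional marginals of any solution. By the standard principle that uniqueness of one-dimensional marginals yields well-posedness of the martingale problem (e.g.\ \cite[Theorem~4.4.2]{EK:09}), the martingale problem for $L$ is well-posed; alternatively, $L$ is the generator of the Dawson--Watanabe superprocess over the deterministic motion $dX_t=-dt$ absorbed at $0$ with local branching rate $\alpha$, whose well-posedness and log-Laplace (Riccati) equation are classical. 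The step I expect to be most delicate is the domain bookkeeping in the Laplace-transform argument: the exponentials $\exp(\langle\psi_s,\fdot\rangle)$ are neither compactly supported nor, for merely continuous $\alpha$, smooth, so they do not literally lie in $\Dcal$, and the localization above --- resting on weak-$*$-compactness of mass-bounded sets of measures and on non-explosion of $\langle1,\mu\rangle$ --- is what legitimizes feeding them into the martingale problem. Once that is set up, the affine structure collapses the drift of $M$ to zero and the remaining implications are citations.
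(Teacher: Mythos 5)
Your proposal is correct in substance and reaches every assertion of the theorem, but for the uniqueness and Laplace-transform claims it takes a genuinely different, more self-contained route than the paper. The first half is identical: the affine-type identity with $F\equiv 0$, $R(g)=-g'+\tfrac12\alpha g^2$ is the same direct computation; existence, the HJM condition, \eqref{eq:finitesup} and NAFLVR are obtained exactly as in the paper by feeding $B_0\equiv Q_0\equiv Q_2\equiv 0$, $B_1=-\tfrac{d}{dx}$, $Q_1=\alpha\diag$ into Theorem~\ref{main1} (your observation that $Q_2\equiv0$ admits the trivial $(\beta,\pi)$-representation is the right way to see this); and the Riccati solution is checked by the same transport-operator computation $D a=0$, $Db=\tfrac12\alpha$. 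Where you diverge: the paper does \emph{not} construct the exponential martingale by hand. It verifies that $\sup_{x,t}|\psi_t(x)|<\infty$ and $\sup_{x,t}|\tfrac{d}{dt}\psi_t(x)|<\infty$, concludes that $\psi$ is a solution in the sense of Definition~6.4 of \cite{CGPS:21}, and then cites Theorem~6.6 there for \eqref{eq:Laplacetrans}; uniqueness is likewise delegated to \cite[Corollary~6.9]{CGPS:21}, whose hypothesis is checked by noting that $-\tfrac{d}{dx}$ generates the strongly continuous positive semigroup $P_tf(x)=f((x-t)^+)$. Your alternative — a localized time-dependent Dynkin argument for $M_t=\exp(\langle\psi_{\theta-t},\mu_t\rangle)$, followed by uniqueness of one-dimensional marginals via \cite[Theorem~4.4.2]{EK:09} — is a legitimate and standard path (it is essentially how the cited results are themselves established), and it buys independence from the companion paper; what it costs is precisely the ``domain bookkeeping'' you flag, namely that $\exp(\langle\psi_s,\fdot\rangle)$ is neither compactly supported nor built from $C^\infty$ integrands, so it does not lie in $\Dcal$ and the cutoff/localization must be carried out carefully — this is the work that Definition~6.4 and Theorem~6.6 of \cite{CGPS:21} package away. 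One further point you correctly spotted and the paper glosses over: since $g((x-t)^+)$ is in general only $C^1$ at $x=t$ (only $g'(0)=0$ is guaranteed, not vanishing of higher derivatives), $\psi_t$ is not literally $(D_1)_-$-valued; the paper's appeal to the weaker solution concept of Definition~6.4 is what makes the statement rigorous, and your ``enlarged admissible class'' plays the same role.
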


\begin{proof}
A direct computation yields that $L$ is of affine type with $F=0$ and $$R(g)(x)=-\frac{d}{dx} g(x) + \frac{1}{2} \alpha(x) g^2(x),$$ 
for each $g\in (D_{1})_-$. Existence of solutions to the martingale problem and all related assertions follow from Theorem~\ref{main1} and the subsequent corollary. Uniqueness is a consequence of \cite[Corollary 6.9]{CGPS:21} by noticing that $- \frac{d}{dx}$ is the generator of a strongly continuous positive semigroup, namely the negative shift semigroup $(P_t)_{t \in [0,T]}$ given by $P_tf(x):=f((x-t)^+)$. 
Finally, a direct computation using that $g\in (D_1)_-$ shows that $\psi$ is a $(D_1)_-$-valued solution of the Riccati PDE. Since $\sup_{x,t\in[0,T]}|\psi_t(x)|<\infty$ and $\sup_{x,t\in[0,T]}|\frac d {dt}\psi_t(x)|<\infty$  we can conclude that it is also a solution in sense of Definition~6.4 in \cite{CGPS:21} and 
 the form of the Laplace transform follows from Theorem~6.6 in same paper.
\end{proof}

\begin{remark}
Notice that the operator $L$ defined in \eqref{eq:affine} is a variant of the Dawson-Watanabe superprocess
 (in the terminology of~\cite{E:00}\footnote{Note that in \cite{L:10}  ``Dawson-Watanabe superprocess'' is used for a class of measure-valued branching processes which can also exhibit jumps.}), also called super-Brownian motion,  where the constant diffusion coefficient is replaced by a function and where the spatial motion is governed by the negative shift semigroup.
It can thus be seen as 
measure-valued analog of one-dimensional Cox-Ingersoll-Ross processes.
\end{remark}

\subsection{Option pricing}\label{sec:options}

We now investigate option pricing in the two model classes considered above. We focus on European-style options on  future contracts with payoff function $\theta:\mathbb{R}\rightarrow\mathbb{R}$ and exercise time $\tau \in [0,\tau_1]$. Classical examples are standard call and put options with strike price $K\geq 0$, for which the payoff function is defined by $\theta(x)=(x-K)^+$, respectively by $\theta(x)=(K-x)^+$.
Computing the price of such an option at time  $t \leq \tau \leq \tau_1$ then amounts to calculate
\begin{align}\label{eq:pricing}
\pi_t=\mathbb{E}_{\mathbb{Q}}[ \theta(F(\tau, \tau_1, \tau_2))| \mathcal{F}_t]
\end{align}
under some equivalent (local) martingale measure $\mathbb{Q}$ (compare e.g.~\cite{BBK:08} or \cite{BDL:21}). We here shall suppose that $F(\tau, \tau_1, \tau_2)$ is given by
$$
F(\tau, \tau_1, \tau_2)=\int_E  w(\tau+x; \, \tau_1, \tau_2)\mathbbm{1}_{(\tau_1, \tau_2]}(\tau+x) \mu_\tau(dx),
$$
as in
\eqref{eq:future1}, with $(\mu_t)_{t \in [0,T]}$ being either a  Black-Scholes type or affine measure-valued HJM-model under $\mathbb{Q}$.

\subsubsection{Option pricing in Black-Scholes type measure-valued HJM-models}

As shown in Proposition \ref{prop:BStype} all assumptions such that the moment formula holds true are satisfied if $(\mu_t)_{t \in [0,T]}$ is a Black-Scholes type measure-valued HJM model. This can then be exploited to use polynomial techniques to compute \eqref{eq:pricing}. The simplest one is a variance reduction technique for  Monte Carlo pricing. In this case one approximates the true payoff $\theta$ via a polynomial $p$ (e.g. in the uniform norm on some interval) and then computes an estimator for the price (here for $t=0$) via
\[
\frac{1}{N}\sum_{i=1}^N \theta(F(\tau, \tau_1, \tau_2)(\omega_i))- c\Big( p(F(\tau, \tau_1, \tau_2)(\omega_i))- \mathbb{E}_{\mathbb{Q}}[ p(F(\tau, \tau_1, \tau_2))]\Big)
\]
for some constant $c \in \mathbb{R}$. By choosing $c$ optimally namely as
$$
c=\frac{\text{Cov}(\theta(F(\tau, \tau_1, \tau_2)), p(F(\tau, \tau_1, \tau_2)))}{\text{Var}(p(F(\tau, \tau_1, \tau_2)))},
$$
the variance with respect to the usual Monte Carlo estimator corresponding to the case $c=0$ is reduced by the factor $1- \text{Corr}^2(\theta(F(\tau, \tau_1, \tau_2)), p(F(\tau, \tau_1, \tau_2)))$. Suppose now that $p(x)=\sum_{k=0}^m   a_k x^k$ for coefficients $a_k \in \mathbb{R}$ and hence
$$\mathbb{E}_{\mathbb{Q}}[ p(F(\tau, \tau_1, \tau_2))]=\sum_{k=0}^m a_k\mathbb{E}_{\mathbb{Q}}[ F(\tau, \tau_1, \tau_2)^k].$$
As in the proof of Theorem~\ref{th:main1}, let $h_n\in C(\R^\infty)$ be a bounded sequence of  functions approximating $\mathbbm{1}_{(\tau_1, \tau_2]}$ pointwise such that $h_n(u)=0$ for each $u\in[0,\tau_1]$.
Then since by \cite[Lemma~5.7]{CGPS:21} all moments are finite, we have by dominated convergence 
\begin{align*}
\mathbb{E}_{\mathbb{Q}}[ F(\tau, \tau_1, \tau_2)^k]&= \mathbb{E}_{\mathbb{Q}}\bigg[ \Big( \int_{(\tau_1-\tau, \tau_2-\tau]}  w(\tau+x; \, \tau_1, \tau_2) \mu_{\tau}(dx)\Big)^k\bigg]\\
&= \mathbb{E}_{\mathbb{Q}}[ \langle  w(\tau+\fdot; \, \tau_1, \tau_2)\mathbbm{1}_{(\tau_1, \tau_2]}(\tau+ \fdot), \mu_{\tau} \rangle^k]\\
&=\mathbb{E}_{\mathbb{Q}}[ \lim_{n \to \infty} \langle  w(\tau+\fdot; \, \tau_1, \tau_2) h_n(\tau+ \fdot), \mu_{\tau} \rangle^k]\\
&=\lim_{n \to \infty}\mathbb{E}_{\mathbb{Q}}[  \langle  w(\tau+\fdot; \, \tau_1, \tau_2) h_n(\tau+ \fdot), \mu_{\tau} \rangle^k].
\end{align*}
As $(\mu_t)_{t \in [0,T]}$ is a Black-Scholes type measure-valued HJM-model this  expected value
can then be computed  by solving 
\[
\frac d {dt} \vec{g}^n_t =  L_m \vec{g}^n_t, \qquad 
\vec{g}^n_0=(g^n_{0,0}, g^n_{0,1}, \ldots, g^n_{0,m})m
\]
where  $L_m$ is given by \eqref{eq:Lm1} and 
\[
g^n_{0,k}= a_k (w(\tau+\fdot; \, \tau_1, \tau_2)h_n(\tau+ \fdot))^{\otimes k}, \quad k=0, \ldots,m.
\]
Alternatively to this variance reduction technique 
option pricing methods that build on orthogonal polynomial expansions as in \cite{AF:20} can be applied.

\subsubsection{Option pricing in affine measure-valued HJM-models}

Let us now turn our attention to affine measure-valued HJM-models. In this case the Laplace transform is explicitly known once the associated Riccati-PDE is solved. Since this can  be transferred to the Fourier-transform, Fourier option pricing methods (see e.g.~\cite{CM:99}) that are often applied for standard finite dimensional affine processes also come into play in this measure-valued situation.

Suppose $ \theta $ can be expressed by 
\begin{align}\label{eq: fourier inv}
\theta(x)=\int_{\mathbb{R}} \exp( (C+\text{i}\lambda) x) \widehat \theta(\lambda)d\lambda, \quad \lambda \in \mathbb{R},
\end{align}
for some integrable function $\widehat \theta$ and some constant $C\in \mathbb{R}$. If, moreover
\begin{align}\label{eq:expmoment}
\mathbb{E}_{\mathbb{Q}}[\exp{ (C F(\tau,\tau_1, \tau_2))} ]<\infty,
\end{align}
then \eqref{eq:pricing} for $t=0$ can be expressed as
\begin{align}
\pi_0&=\mathbb{E}_{\mathbb{Q}}\left[\int_{\mathbb{R}}\exp\Big((C+ \text{i} \lambda) F(\tau,\tau_1, \tau_2)\Big) \widehat \theta (\lambda)d\lambda \right]\nonumber \\
&=\int_{\mathbb{R}}\mathbb{E}_{\mathbb{Q}}\Big[\exp\Big(( C+i\lambda) F(\tau,\tau_1, \tau_2)\Big)\Big]\widehat \theta(\lambda)d\lambda \nonumber \\
&=\int_{\mathbb{R}}\mathbb{E}_{\mathbb{Q}}[\exp(\langle (C+\text{i}\lambda) w(\tau+\fdot; \, \tau_1, \tau_2)\mathbbm{1}_{(\tau_1, \tau_2]}(\tau+ \fdot), \mu_{\tau}(dx) \rangle )]\widehat \theta(\lambda)d\lambda. \label{eq:price}
\end{align}
As above let us consider a bounded sequence of smooth functions $h_n$ approximating  $\mathbbm{1}_{(\tau_1, \tau_2]}$. Then \eqref{eq:price} becomes
\begin{align*}
\pi_0&= 
\int_{\mathbb{R}}\mathbb{E}_{\mathbb{Q}}[\lim_{n \to \infty} \exp(\langle (C+\text{i}\lambda) w(\tau+\fdot; \, \tau_1, \tau_2) h_n(\tau+ \fdot), \mu_{\tau}(dx) \rangle )]\widehat \theta(\lambda)d\lambda \nonumber \\
&=\int_{\mathbb{R}}\lim_{n \to \infty}\mathbb{E}_{\mathbb{Q}}[ \exp(\langle (C+\text{i}\lambda) w(\tau+\fdot; \, \tau_1, \tau_2) h_n(\tau+ \fdot), \mu_{\tau}(dx) \rangle )]\widehat \theta(\lambda)d\lambda \nonumber\\
&=\int_{\mathbb{R}} \lim_{n \to \infty}\exp(\langle \psi_{\tau}^{n, \lambda}, \mu_0 \rangle)\widehat \theta(\lambda)d\lambda\nonumber\\
&=\int_{\mathbb{R}}\exp(\langle \psi_{\tau}^{ \lambda}, \mu_0 \rangle)\widehat \theta(\lambda)d\lambda,
\end{align*}
where the second equality follows due to  \eqref{eq:expmoment} from dominated convergence  and the third is a consequence of 
\eqref{eq:Laplacetrans} extended to the Fourier-Laplace transform (see Remark \ref{rem:extension} below). Here $\psi^{\lambda}$ solves the Riccati PDE with initial condition  $g_\tau^\lambda(x):=(C+\text{i}\lambda) w(\tau+x; \, \tau_1, \tau_2)\mathbbm{1}_{(\tau_1, \tau_2]}(\tau+ x).$ Its value at time $t$ is thus given by
\begin{align*}
    \psi_t^\lambda(x)&=\frac {g_\tau^\lambda\big((x-t)^+\big)}{1-g_\tau^\lambda\big((x-t)^+\big)\int_0^t\frac 1 2 \alpha\big((x-s)^+\big)ds}\\
&=
\frac {(C+\text{i}\lambda) w(\tau+(x-t)^+; \, \tau_1, \tau_2)}
{1-(C+\text{i}\lambda) w(\tau+(x-t)^+; \, \tau_1, \tau_2)\int_0^t\frac 1 2 \alpha\big(\tau+(x-s)^+\big)ds}
\mathbbm{1}_{(\tau_1, \tau_2]}(\tau+(x-t)^+).
\end{align*}
For $t=\tau$ this expression simplifies to
\begin{align}\label{eq:explicit}
    \psi_\tau^\lambda(x)&=
\frac {2(C+\text{i}\lambda) w(x; \, \tau_1, \tau_2)}
{2-(C+\text{i}\lambda) w(x; \, \tau_1, \tau_2)(A(x)-A(x-\tau))}
\mathbbm{1}_{(\tau_1, \tau_2]}(x),
\end{align}
where $A$ is the primitive of $\alpha$.
\begin{remark}\label{rem:extension}
Even if $g_\tau^\lambda(x)$ is not non-positive, we can  follow the classical affine literature (see  \cite{KM:15}) using the exponential moment condition \eqref{eq:expmoment} to extend the results of Theorem~\ref{mainaffine} to $g_\tau^\lambda$.
\end{remark}

\begin{example} \label{ex1}
From Fourier analysis we know that for $C>0$ it holds
$$y^+=\frac 1 {2\pi}\int_\R \frac{e^{(C+i\lambda)y}}{ (C + \text{i} \lambda)^2}d\lambda.$$
We can thus see that $\theta(x):=(x-K)^+$ satisfies \eqref{eq: fourier inv} for
\[
\widehat{\theta}(\lambda)= \frac{\exp((-C-\text{i}\lambda)K)}{2 \pi (C + \text{i} \lambda)^2}
\]
so that we obtain using \eqref{eq:explicit}
\begin{align*}
    &\E_\Q[(F(\tau;\tau_1,\tau_2)-K)^+]
=\int_{\mathbb{R}}\exp(\langle \psi_{\tau}^{ \lambda}, \mu_0 \rangle)\widehat \theta(\lambda)d\lambda\\
&\quad=\frac 1 {2\pi}\int_\R \frac{1}{(C + \text{i} \lambda)^2}
e^{(-C-\text{i}\lambda)\big(K-\int_{(\tau_1, \tau_2]} \frac {2 w(x; \, \tau_1, \tau_2)}
{2-(C+\text{i}\lambda) w(x; \, \tau_1, \tau_2)(A(x)-A(x-\tau))} \mu_0(dx)\big)} d\lambda.
\end{align*}
 In the particular case of $w(x;\tau_1,\tau_2):=(\tau_2-\tau_1)^{-1}$ this expression simplifies to
\begin{align}\label{eq:explicit_fourier}
\frac 1 {2\pi}\int_\R \frac{1}{(C + \text{i} \lambda)^2}
e^{(-C-\text{i}\lambda)\big(K-\int_{(\tau_1, \tau_2]}\frac {2 }
{2(\tau_2-\tau_1)-(C+\text{i}\lambda) (A(x)-A(x-\tau))} \mu_0(dx)\big)} d\lambda.
\end{align}
\end{example}

\subsection{Calibration to market option prices}\label{sec:cali}

The above pricing methods clearly offer tractable  calibration procedures, where the
function-valued characteristics of the polynomial models
can be calibrated to match the market option prices.

Indeed, the goal is to parametrize these functions as neural networks to get \emph{neural measure-valued processes}, an analog to neural SDEs and SPDEs (see e.g. \cite{GSSSZ:20, CKT:20, SL:21}).
Note that we could of course also  parametrize the map $M_+(E) \to \mathbb{R}: \nu \to Q(g ,\nu)$ for $g \in  D \otimes D$, corresponding to the diffusion part in \eqref{eq:diff} via neural networks taking measures as inputs.
We refer to \cite{BDG:21, AKP:22, CST:22} for neural networks with infinite dimensional inputs (and also outputs). 

By considering the polynomial class the modeling simplifies to standard neural networks that parameterize the function-valued characteristics. While tractability is preserved, this leads to potentially highly parametric infinite dimensional stochastic models which have a great potential to capture multifarious features of energy markets. Parameter optimization 
can be performed efficiently due to
stochastic gradient descent methods and
 tractable backpropagation schemes.

We shall now exemplify this by means of an affine measure-valued HJM model whose generator is given by \eqref{eq:affine} with $\alpha \in C(E)$ being  parameterized by a non-negative neural network. We here do not aim to perform a full calibration to all available option price data, but just show a proof of concept by calibrating the model to call option prices for one specific maturity and one delivery period (i.e.~one slice of the volatility surface (or cube)). 

We use  EEX German Power data extracted from
$$\text{\url{https://www.eex.com/en/market-data/power/options}}$$
at March 22, 2022 and calibrate to call options with maturity April 26, 2022 and delivery period one month, starting on May 1, 2022.
As our goal is to provide just a calibration example, we use a simple $L^1$-criterion for minimizing the distance between the market call option prices and the respective model prices, i.e. we minimize
\begin{align}\label{eq:minimization}
\sum_{K} |\pi_{\text{mkt}}(K)-\pi_0(K)|,
\end{align}
where $\pi_{\text{mkt}}(K)$ denotes the market call option price with strike $K$ and $\pi_0(K)$ the model call option price obtained via Fourier pricing as explained in Example~\ref{ex1} for
$$w(u;\tau_1;\tau_2):=\frac 1 {\tau_2-\tau_1}.$$
The day ahead forward price on March 22, 2022 is 236.49 and we normalize all prices by this quantity to start with $1$. We then deal with 
10 market strikes ranging from $0.9$ to $1.1$ 
and 71 daily values of the initial forward curve, encoded in $\mu_0$, from March 22, 2022 to May 31, 2022.
The corresponding $\tau_1$ and $\tau_2$ are thus approximately given by $\tau_1=1.33/12$ and $\tau_2=2.33/12$. 
The integral with respect to $\lambda$  in \eqref{eq:explicit_fourier} is computed by discretization from $-100$ to $100$. The function $\alpha$ is parameterized by a $3$ layer neural network with activation functions $\tanh$ for the first layer and relu for the two outer layers. 

Computing then \eqref{eq:minimization} yields the calibration results illustrated in Figure \ref{fig:1}. Absolute, relative errors and also squared errors are shown in Figure \ref{fig:2}. Note that the calibration is extremely accurate for the at the money options. The mean absolute error (over all strikes) is 0.00096, the mean relative error 0.011 and the mean squared error $1.20 \times 10^{-6}$.
When the parameters of the neural networks are appropriately initialized, then training is very fast (some seconds on a standard laptop), but the initialization is quite crucial in order to start in regimes where the model prices are not too far from the the market prices. In Figure \ref{fig:3}, we show how the curves looked at the start of the training. The target curve can then be reached in around 150-300 gradient step iterations with a learning rate of 0.01.

\begin{figure} [ht!]
\begin{center}
\includegraphics[scale=0.5]{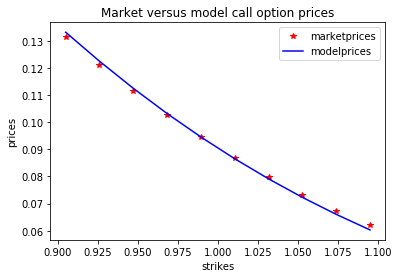}
\end{center}
\caption{Market versus model call option prices for options expiring on April 26, 2022, written on forward contracts with delivery during May.}
\label{fig:1}
\end{figure}

\begin{figure} 
\begin{center}
\includegraphics[scale=0.35]{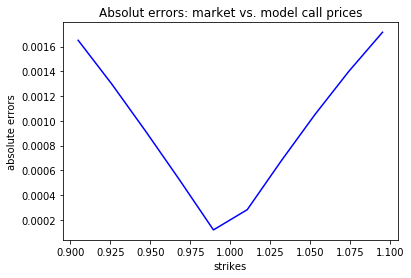}
\includegraphics[scale=0.35]{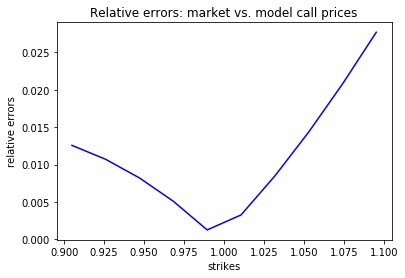}
\includegraphics[scale=0.35]{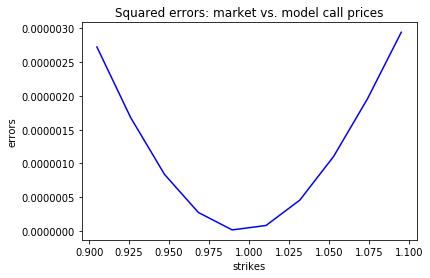}
\end{center}

\caption{Absolute, relative and squared errors between market and model call option prices.}\label{fig:2}
\end{figure}

\begin{figure} [ht!]
\begin{center}
\includegraphics[scale=0.5]{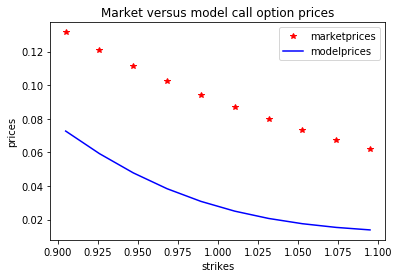}
\end{center}
\caption{Market versus model call option prices at the start of the training.}
\label{fig:3}
\end{figure}

As mentioned in Remark \ref{rem:inhomogen} the model can be easily extended by making the function-valued characteristics time-dependent, which then does not only allow to capture seasonality features but also to adapt the current calibration procedure to a slicewise calibration as conducted in \cite{CKT:20, GSSSZ:20, CGS:22}, leading to similarly accurate calibration results for all maturities.

\section{Measure-valued processes for renewable energy production} \label{sec:wind}

In this section we give a brief outlook how measure-valued processes could 
be used to model renewable energy production. We outline this by means of the  example of wind energy markets. We briefly introduce some characteristics of this market relying on \cite{BPL:18}. 

The production of wind energy has played an increasingly important role as a renewable energy source. This type of energy can be generated only in the presence of a suitable amount of wind: a minimum wind's speed is required and too strong wind would cause  damages. 
A direct law relating the maximal power production from a wind power plant to the wind speed is  Betz's law. If one introduces $m\in\mathbb{R}_+$ as the minimum wind speed which allows for  production and $M\in\mathbb{R}_+$ as the maximum one, Betz's law expresses the upper bound of wind power production $P^{upper}$ in terms of the wind speed $v$ at time $t$ as
\begin{equation}\label{eq:betz}
P^{upper}_t=hv^3_t\mathbbm{1}_{[m,M]}(v_t), \quad t \in [0,T],
\end{equation}
where $h$ is a positive constant describing the heat rate,  being a real positive constant measuring the efficiency of the plant (see \cite{BPL:18} for  further details). It can happen that the wind is too strong so that  the company faces a surplus of production that must be sold in the market. This can in turn trigger a decrease of related electricity prices. Hence it becomes essential for an energy producer to hedge against these risks. In energy markets there exist weather derivatives that have been designed to cover such kinds of exposure. A particular class of contracts  are  so-called quanto options that simultaneously take into account the power production driven by the wind speed, as well as the electricity price. In order to price such  contracts, one thus has to specify  dynamics for both the electricity price and the wind speed. 
According to \cite{BPL:18}, a typical payoff of a quanto option is the product of two put options written on the one hand on electricity futures and on the other hand on the 
future cumulative gap between the theoretical total maximum production $P^{upper}$  and the actual wind power production $P$.
The price of such a quanto option can thus be written as
$$
\Pi_t=\mathbb{E}_{\mathbb{Q}}[(K_E-F_E(\tau;\tau_1,\tau_2))^+ (K_I-F_I(\tau_1,\tau_2))^+|\mathcal{F}_t], \quad t \leq \tau \leq \tau_1,
$$
where $F_E(\tau;\tau_1,\tau_2)$ is a future on electricity as in the previous sections, $F_I(\tau_1,\tau_2)$ is the described payoff  related to the wind power production given by
\begin{equation}\label{eq:winforward}
F_I(\tau_1,\tau_2)=\int_{\tau_1}^{\tau_2}(P_u^{upper}-P_u )du
\end{equation}
and $K_E,K_I>0$ are the respective strike values. 
Note that, when fixing $m=0$ and $M=\infty$ in \eqref{eq:betz}, \eqref{eq:winforward} describes the gap between the maximal wind power production and the actual one.

We now outline a possible method to exploit measure-valued processes for modeling  wind energy markets. Suppose that for each time $t$ and each location $x$ the wind's speed for the time period $[t,t+24h]$ is known at time $t$. Let $(\nu_t(x,\fdot))_{t\geq 0}$ be a measure-valued process on $M_+(\mathbb{R}_+)$ and interpret $\nu_t(x,\fdot)$ as the normalized distribution of the wind speed $v \in \mathbb{R}_+$ at the geographical location $x \in \mathbb{R}^2$ over 24 hours starting at time $t$. For example, if at day $t$ the forecasts for the next 24 hours predict a constant wind of speed $v_0$ at location $x$ we set $\nu_t(x,\fdot)=\delta_{v_0}$.

Assuming that the wind power production over 24 hours of a  unit located in $x$ is a deterministic function $w$ of the wind speed and the location of the unit, then  we can express the average wind's power's production $P_t(x)$ at time $t$ by
$$
P_t(x)=\int_{\mathbb{R}_+} w(v,x)\nu_t(x,dv).
$$
According to Betz's law this quantity is bounded by 
$P_t(x)^{upper}=h(x)\int_{\R_+}v^3 \nu_u(x,dv)$.
Letting $\lambda$ be the distribution of power plants' units on the territory, \eqref{eq:winforward} then becomes
\begin{equation*}
F_I(\tau_1,\tau_2)=\int_{\tau_1}^{\tau_2}\int_{\R^2}\int_{\R_+}\big(h(x)v^3 - w(v,x)\big)\nu_t(x,dv)\lambda(dx)dt.
\end{equation*}
One possibility would then be to model $\mu_t(dv,dx):=\nu_t(x,dv)\lambda(dx)$ directly with a measure-valued model. 
By adding a further component to the underlying space $\mathbb{R}_+\times \mathbb{R}^2$ one can then simultaneously model also the electricity futures via the measure-valued approach.
Similar tractable methods as those studied in the previous sections can therefore be used to price quanto options.

\appendix
\section{Auxiliary results}
The following lemma is a key result for the proof of Theorem~\ref{th:main1}.
\begin{lemma}\label{lem:mart}
Let $(\mu_t)_{t \in [0,T]}$ be a $M_+(E)$-valued process such that $\mathbb{E}_{\mathbb{Q}}[\sup_{t \in [0,T]} \mu_t(E)] < \infty$ and such that
for all $\phi \in D_1$ 
\begin{align}\label{eq:driftmu}
\langle \phi, \mu\rangle + \int_0^\cdot \langle \phi', \mu_s \rangle ds, 
\end{align}
is a local martingale on $[0,T]$. Then 
for all $\Phi \in D_2(\tau_1) $ 
\begin{align}\label{eq:martingalecond}
\langle \Phi(t, \fdot), \mu_t \rangle - \langle \Phi(0, \fdot), \mu_0 \rangle + \int_0^{t}( \langle \frac d {dx}  \Phi(s, \fdot), \mu_s \rangle - \langle \frac d {dt} \Phi(s, \fdot), \mu_s \rangle )ds,
\end{align}
is a martingale on $[0,\tau_1]$.
\end{lemma}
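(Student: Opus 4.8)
The plan is to bootstrap from the time-independent hypothesis \eqref{eq:driftmu} to arbitrary $\Phi\in D_2(\tau_1)$ in three stages: first separable functions, then finite sums, then a density argument. Throughout I will freely use that a local martingale dominated by an integrable random variable is a true martingale, which applies here because of the assumption $\mathbb{E}_{\mathbb{Q}}[\sup_{t\in[0,T]}\mu_t(E)]<\infty$.

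\textbf{Stage 1 (upgrade to true martingales).} For $\phi\in D_1$ set $N^\phi_t:=\langle\phi,\mu_t\rangle+\int_0^t\langle\phi',\mu_s\rangle\,ds$. By hypothesis this is a $\mathbb{Q}$-local martingale, and since $|N^\phi_t|\leq \|\phi\|\,\mu_t(E)+T\|\phi'\|\sup_{s\leq T}\mu_s(E)$ is dominated by an integrable random variable, $N^\phi$ is of class (DL), hence a true $\mathbb{Q}$-martingale.

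\textbf{Stage 2 (separable functions).} Let $\psi\in C^1([0,T])$ and $\phi\in D_1$, and write $\Phi(s,x):=\psi(s)\phi(x)$, which lies in $D_2(\tau_1)$ because $\phi'(0)=0$. Using $\langle\phi,\mu_t\rangle=N^\phi_t-\int_0^t\langle\phi',\mu_s\rangle\,ds$ and the integration-by-parts formula for the product of the finite-variation process $\psi$ (together with the finite-variation process $s\mapsto\int_0^s\langle\phi',\mu_r\rangle\,dr$) and the martingale $N^\phi$, a short computation yields
\[
\psi(t)\langle\phi,\mu_t\rangle-\psi(0)\langle\phi,\mu_0\rangle+\int_0^t\Big(\psi(s)\langle\phi',\mu_s\rangle-\psi'(s)\langle\phi,\mu_s\rangle\Big)ds=\int_0^t\psi(s)\,dN^\phi_s.
\]
Since $\tfrac{d}{dx}\Phi(s,x)=\psi(s)\phi'(x)$ and $\tfrac{d}{dt}\Phi(s,x)=\psi'(s)\phi(x)$, the left-hand side equals the expression in \eqref{eq:martingalecond}, and the right-hand side is a true martingale on $[0,\tau_1]$ because $\psi$ is bounded there and $N^\phi$ is a true martingale.

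\textbf{Stage 3 (linearity and density).} By linearity, \eqref{eq:martingalecond} is a martingale on $[0,\tau_1]$ for every finite sum $\sum_i\psi_i(t)\phi_i(x)$ with $\psi_i\in C^1([0,T])$ and $\phi_i\in D_1$. For general $\Phi\in D_2(\tau_1)$, approximate $\Phi$ on the compact rectangle $[0,\tau_1]\times E$ in $C^1$-norm (indeed in $C^2$) by polynomials $P_n$ in $(t,x)$ — e.g.\ by mollifying $\Phi$ and applying the Weierstrass theorem to the mollifications and their derivatives. Such $P_n$ need not satisfy $\tfrac{d}{dx}P_n(t,0)=0$, so correct them by setting $\widetilde P_n(t,x):=P_n(t,x)-x\,\tfrac{d}{dx}P_n(t,0)$. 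The function $t\mapsto\tfrac{d}{dx}P_n(t,0)$ is a polynomial converging to $\tfrac{d}{dx}\Phi(t,0)=0$ uniformly on $[0,\tau_1]$ together with its derivative, so $\widetilde P_n\to\Phi$ still holds in $C^1([0,\tau_1]\times E)$; moreover $\widetilde P_n$ has no monomial of $x$-degree one and is therefore a finite sum of terms $t^j x^k$ with $k=0$ or $k\geq 2$, i.e.\ of the admissible separable form with $\psi=t^j\in C^1$ and $\phi=x^k\in D_1$. Finally, pass to the limit: the difference between \eqref{eq:martingalecond} evaluated at $\widetilde P_n$ and at $\Phi$ is bounded in absolute value by $C\,\|\widetilde P_n-\Phi\|_{C^1([0,\tau_1]\times E)}\,\sup_{s\leq T}\mu_s(E)$, which is integrable and tends to $0$; hence \eqref{eq:martingalecond} for $\widetilde P_n$ converges in $L^1(\mathbb{Q})$, for each $t\in[0,\tau_1]$, to \eqref{eq:martingalecond} for $\Phi$, which is therefore a martingale on $[0,\tau_1]$.

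The step I expect to be the main obstacle is the density argument in Stage 3: one must simultaneously obtain $C^1$-convergence of the approximants \emph{and} keep them inside the class $D_2(\tau_1)$ (equivalently, inside sums of $D_1$ times $C^1$). The correction-term trick resolves this, but it requires checking that the correction vanishes in $C^1$-norm and that it does not reintroduce an $x$-linear monomial. The integration-by-parts identity in Stage 2 and the class-(DL) upgrade in Stage 1 are routine.
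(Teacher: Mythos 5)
Your proof is correct and follows the same skeleton as the paper's: upgrade \eqref{eq:driftmu} to a true martingale by domination, handle separable functions $\psi(t)\phi(x)$ via the product rule, and then pass to general $\Phi\in D_2(\tau_1)$ by a $C^1$-density argument combined with dominated convergence. The one place where you genuinely diverge is the density step. The paper invokes (an adaptation of) Nachbin's $C^1$ Stone--Weierstrass theorem to approximate $\Phi$ in the $C^1$-norm by elements of the algebra generated by the separable functions; this is stated rather than carried out, and verifying the hypotheses (point separation of derivatives away from $x=0$) is the delicate part. You instead approximate by genuine polynomials in $(t,x)$ and enforce the boundary condition $\tfrac{d}{dx}\Phi(t,0)=0$ by the explicit correction $\widetilde P_n(t,x)=P_n(t,x)-x\,\tfrac{d}{dx}P_n(t,0)$, which kills exactly the $x$-linear monomials and leaves a sum of admissible terms $t^jx^k$ with $k\neq 1$. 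This is more elementary and more verifiable than the Nachbin route; the price is that you need $C^2$ (not just $C^1$) polynomial approximation so that the correction term also vanishes in $C^1$, which you correctly note and which is unproblematic since $\Phi$ is smooth on a compact rectangle. Both arguments then conclude identically by the $L^1$-bound $C\,\|\widetilde P_n-\Phi\|_{C^1}\sup_s\mu_s(E)$.

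One small imprecision: in Stage 2 you justify that $\int_0^t\psi(s)\,dN^\phi_s$ is a true martingale ``because $\psi$ is bounded and $N^\phi$ is a true martingale.'' A bounded integrand against an $L^1$-martingale yields in general only a local martingale; the correct justification (which the paper uses, and which you have already set up in your opening remark) is that this stochastic integral equals the left-hand side of your identity, which is dominated by $C\sup_{s\le T}\mu_s(E)\in L^1$, so the local martingale is of class (DL) and hence a true martingale. This is a one-line repair, not a gap.
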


\begin{proof}
We first prove the assertions for functions $\Phi$ of the form $\Phi(t,x)=\psi(t) \phi(x)$ with 
$\psi \in C^{\infty}([0,\tau_1])$ 
and $\phi \in D_1$.
Denote by $Z$ the process
\[
Z_t=\langle \phi, \mu_t \rangle + \int_0^t \langle \phi', \mu_s \rangle ds, \quad t \in [0,T],
\]
which is a local martingale by \eqref{eq:driftmu}.
Then by the It\^o's product rule, we have
\begin{align}\label{eqn1}
    \psi(t)\langle \phi, \mu_t \rangle=
    \psi(0)\langle \phi, \mu_0 \rangle+ \int_0^t \psi'(s) 
\langle \phi, \mu_s \rangle -\psi(s) \langle \phi', \mu_s \rangle
ds + \int_0^t \psi(s) dZ_s.
\end{align}

Note then that it holds
\begin{align*}
\mathbb{E}\left[\left|\sup_{t \leq T} \int_0^t \psi(s) dZ_s\right|\right] &\leq 
\mathbb{E}\left[\sup_{t \leq T}|\psi(t)\langle \phi, \mu_t \rangle|\right] + \mathbb{E}\left[|\psi(0)\langle \phi, \mu_0 \rangle| \right] \\
&\quad + \mathbb{E}\left[\sup_{t \leq T} \left|\int_0^t \psi'(s) 
\langle \phi, \mu_s \rangle - \psi(s) \langle \phi', \mu_s \rangle ds\right|\right]\\
&\leq C \mathbb{E}\left[\sup_{t \leq T} \mu_t(E)\right] < \infty,
\end{align*}
where $C$ denotes some constant and where the  last inequality follows from the fact that $\phi \in D_1$ and $\psi \in  C^{\infty}([0, \tau_1])$. Since the last expression is finite by assumption, we can conclude that 
$\int_0^t \psi(s) dZ_s$ is a true martingale. Since
\begin{equation*}
    \Psi_x=\psi\phi'\text{\qquad and \qquad}\Psi_t=\psi'\phi,
\end{equation*}
the martingale condition for \eqref{eq:martingalecond} in for $\Phi(t,x)=\psi(t) \phi(x)$ follows from \eqref{eqn1}.

Next, note that the set of functions $\{ (t,x)  \mapsto \psi(t)\phi(x) \,| \,  \phi \in D_1, \psi \in C^{\infty}([0, \tau_1]) \}$ forms a point-separating algebra that vanishes nowhere,
and which has additionally the property that for all $v \in \mathbb{R}^2 \setminus \{0\}$ and $(t,x) \in [0, \tau_1] \times E$ such that $x \neq 0$ there exists some $\psi, \phi$ with $(\phi'(x),\psi'(t))v \neq 0$.
By an adaptation of Nachbin's version of the Stone-Weierstrass theorem (see \cite{N:49}) that it is dense
in $D_2(\tau_1)$ with respect to the $C^1$-norm
\[
\| \Phi\|_{C^1} :=
\sup_{(t,x) \in E \times E} | \Phi(t,x) |
+
\sup_{(t,x) \in E \times E} | \frac d {dt}\Phi(t,x) |
+
\sup_{(t,x) \in E \times E} | \frac d {dx} \Phi(t,x) |.
\]

Consider now a sequence $\Phi^n=\psi_n\phi_n$ converging in the $C^1$-norm to $\Phi$. Observe that this in particular implies $\sup_n\|\Phi^n\|_{C^1}<\infty$. Denote then by $M^n$ the true martingale
\[
M_t^n=\langle \Phi^n(t, \fdot), \mu_t \rangle - \langle \Phi^n(t, \fdot), \mu_0 \rangle + \int_0^t( \langle \frac d {dx}  \Phi^n(s, \fdot), \mu_s \rangle - \langle \frac d {dt}\Phi^n(s, \fdot), \mu_s \rangle )ds
\]
and by $M$ the corresponding right hand side without $n$.
By dominated convergence we know that
$\lim_{n\to\infty}M^n_t = M_t$ for all $t$ almost surely and
$$\sup_n|M^n_t|\leq 2(1+T)\sup_n\|\Phi^n\|_{C^1}\sup_{s\in[0,T]}\mu_s(E),$$
which is an integrable random variable. By the dominated convergence theorem we thus get
\[
\mathbb{E}[M_t | \mathcal{F}_s]=\mathbb{E}[\lim_{n \to \infty} M_t^n | \mathcal{F}_s]=\lim_{n \to \infty}  \mathbb{E}[M_t^n | \mathcal{F}_s]= \lim_{n \to \infty} M_s^n= M_s
\]
and $M_t\leq 2(1+T)\sup_n\|\Phi^n\|_{C^1}\sup_{s\in[0,T]}\mu_s(E)$ for each $t\in[0,T]$.
This implies that $M$ and thus \eqref{eq:martingalecond}  is a true martingale showing the claim.
\end{proof}

\section{Existence for martingale problems} 

The next lemma is an adaptation of a well-known result from \cite{EK:09} giving sufficient conditions for the existence of a solution to the martingale problem for a linear operator $L$. Here, it is crucial that $L$ acts of $C_0$-functions on a locally compact, separable, metrizable space, which is satisfied for $M_+(E)$ since $E$ is a compact set.
For the formulation of the lemma recall the set of functions $F^D_c$  defined in \eqref{F^D_c}.

\begin{lemma}\label{lem:martingaleprob}
Suppose that $L:F^D_c\to C_0(M_+(E))$ is of form \eqref{eq:diff} for some $B$ such that 
\begin{align}\label{eq:growth}
|B(1, \nu)| \leq C (1 + \langle 1, \nu \rangle) 
\end{align}
 for some constant $C >0$. Assume furthermore that $L$
 satisfies the positive maximum principle on $M_+(E)$. 
Then for every initial condition in $M_+(E)$, there exists a  continuous $M_+(E)$-valued solution to the martingale problem for $L:F^D_c\to C_0(M_+(E))$. Moreover, $(M_t)_{t\in[0,T]}$ for
$$M_t:=\mu_t(E)-\mu_0(E)-\int_0^t B(1,\mu_s)ds$$
defines a local martingale.
\end{lemma}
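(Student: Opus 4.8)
The plan is to set up the existence of the process as a martingale problem on the locally compact, separable, metrizable space $M_+(E)$, invoke the abstract existence result \cite[Theorem~4.5.4]{EK:09} in the one-point compactification $M^{\mathfrak{\Delta}}_+(E)$, and then use the linear growth bound \eqref{eq:growth} to rule out explosion on $[0,T]$.

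First I would verify that $F^D_c$ is dense in $C_0(M_+(E))$. Since the product of two compactly supported cylindrical functions is again a compactly supported cylindrical function, $F^D_c$ is a subalgebra of $C_0(M_+(E))$; it separates points of $M_+(E)$ because $D$ is dense in $C(E)$ (so the maps $\nu\mapsto\langle g,\nu\rangle$, $g\in D$, already separate measures, and one then composes with a compactly supported $\Psi\in C^\infty$, using finiteness of the total mass to keep the support compact), and it has no common zero (again taking a compactly supported $\Psi$ with $\Psi(0)\neq 0$). By the Stone--Weierstrass theorem for locally compact spaces, $F^D_c$ is dense in $C_0(M_+(E))$. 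Combining this with the standing hypotheses that $L:F^D_c\to C_0(M_+(E))$ satisfies the positive maximum principle and that $M_+(E)$ is locally compact and separable (Section~\ref{sec:notation}), \cite[Theorem~4.5.4]{EK:09} yields, for each $\nu_0\in M_+(E)$, a solution $(\mu_t)_{t\ge 0}$ of the martingale problem for $L$ with sample paths in $D_{M^{\mathfrak{\Delta}}_+(E)}[0,\infty)$, where $f$ and $Lf$ are extended to $M^{\mathfrak{\Delta}}_+(E)$ by $0$ at $\mathfrak{\Delta}$; in particular $N^f_t=f(\mu_t)-f(\mu_0)-\int_0^tLf(\mu_s)\,ds$ is a bounded, hence true, martingale for each $f\in F^D_c$. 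Moreover, since $Lf(\nu)$ depends on $f$ only through $f(\nu)$, $\partial f(\nu)$ and $\partial^2 f(\nu)$, the operator $L$ is local and carries no integral/jump part (its carr\'e du champ $L(f^2)-2fLf$ depends only on $\partial f$), so the solution has no jumps and may be taken with continuous paths, as is standard for martingale problems associated with diffusion operators (cf. \cite{EK:09}).

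Next I would exclude explosion by testing against approximations of the total mass. Fix $\chi_n\in C^\infty(\mathbb{R})$ with compact support and $\chi_n(u)=u$ for $u\in[0,n]$, and set $f_n(\nu):=\chi_n(\langle \bone,\nu\rangle)\in F^D_c$. Computing directional derivatives gives $\partial f_n(\nu)=\chi_n'(\langle \bone,\nu\rangle)\bone$ and $\partial^2 f_n(\nu)=\chi_n''(\langle \bone,\nu\rangle)\,\bone\otimes\bone$, so by linearity of $B(\fdot,\nu)$ and $Q(\fdot,\nu)$,
\[
Lf_n(\nu)=\chi_n'(\langle \bone,\nu\rangle)\,B(\bone,\nu)+\tfrac12\chi_n''(\langle \bone,\nu\rangle)\,Q(\bone\otimes\bone,\nu),
\]
which equals $B(1,\nu)$ whenever $\nu(E)\le n$. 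Letting $\tau_n:=\inf\{t\ge 0:\mu_t(E)\ge n\}$ and stopping $N^{f_n}$ at $\tau_n$, continuity of $\mu(E)$ and $\chi_n(n)=n$ give that $\mu_{t\wedge\tau_n}(E)-\mu_0(E)-\int_0^{t\wedge\tau_n}B(1,\mu_s)\,ds$ is a martingale. Taking expectations and using \eqref{eq:growth},
\[
\mathbb{E}\!\left[\mu_{t\wedge\tau_n}(E)\right]\le \mu_0(E)+C\!\int_0^t\!\big(1+\mathbb{E}\big[\mu_{s\wedge\tau_n}(E)\big]\big)\,ds ,
\]
so Gronwall's lemma gives $\sup_n\mathbb{E}[\mu_{t\wedge\tau_n}(E)]\le(\mu_0(E)+CT)e^{CT}=:K_T$ for all $t\in[0,T]$. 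By Markov's inequality $\mathbb{P}(\tau_n\le T)\le K_T/n\to 0$, hence $\mu_t\in M_+(E)$ for all $t\in[0,T]$ almost surely; restricting to $[0,T]$ yields the desired continuous $M_+(E)$-valued solution, and letting $n\to\infty$ in the stopped identity exhibits $(\tau_n)$ as a localizing sequence, so $M_t=\mu_t(E)-\mu_0(E)-\int_0^tB(1,\mu_s)\,ds$ is a local martingale on $[0,T]$.

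The main obstacle is not the abstract existence step itself: once the state space is locally compact and the domain dense, \cite[Theorem~4.5.4]{EK:09} does the work — which is exactly why the construction is carried out for finite measures on the \emph{compact} interval $E$. Rather, the work lies in the two verifications surrounding it: the Stone--Weierstrass density of $F^D_c$ in $C_0(M_+(E))$, and the non-explosion estimate, where the linear growth bound \eqref{eq:growth} on $B(1,\fdot)$ — which here replaces conservativity, since $L$ need not satisfy $L1=0$ — must be turned, via localization and Gronwall, into the statements that $(\mu_t)$ stays in $M_+(E)$ on $[0,T]$ and that $M$ is a local martingale.
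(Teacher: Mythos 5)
Your proposal is correct and follows essentially the same route as the paper: density of $F^D_c$ in $C_0(M_+(E))$ together with the positive maximum principle feeds into \cite[Theorem~4.5.4]{EK:09} on the one-point compactification $M^{\mathfrak{\Delta}}_+(E)$, and non-explosion on $[0,T]$ is obtained by testing with compactly supported truncations of the total mass, stopping, and Gronwall. The only place you are more informal than the paper is path continuity, which the paper justifies via \cite[Proposition~2]{BE:85} (the derivation property of the carr\'e du champ) — precisely the locality argument you sketch — and a cosmetic difference is that the paper takes the truncation level $m$ strictly larger than the stopping level $n$ to avoid the boundary issue at $\langle 1,\mu_{\tau_n}\rangle=n$.
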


\begin{proof}
The proof is similar to the proof of \cite[Lemma~B.2(i)]{CGPS:21}, but we state it here for completeness.
We shall verify the  conditions of  \cite[Theorem~4.5.4]{EK:09}.
As already mentioned, $M_+(E)$ is a  locally compact, separable and metrizable space. Moreover, by \cite[Lemma~2.6]{CGPS:21}, we have that
$F^D_c$ 
as defined in \eqref{F^D_c} is a dense subset of $C_0(M_+(E))$. 
 Moreover, the positive maximum principle yields that $Lf|_{M_+(E)}=Lh|_{M_+(E)}$ for all $f,h\in F^D_c$  such that $f|_{M_+(E)}=h|_{M_+(E)}$. Since $L(F^D_c)\subseteq C_0(M_+(E))$ by assumption  we may
 regard $L|_{F^D_c}$ as an operator  on $C_0(M_+(E))$. 
This means that all the assumptions of in \cite[Theorem~4.5.4]{EK:09} are satisfied. Recall from Section \ref{sec:notation}  that $M^{\mathfrak{\Delta}}_+(E)$ denotes the one-point compactification of $M_+(E)$ and
define according to the same theorem the linear operator $L^{\mathfrak{\Delta}}$ on $C(M^{\mathfrak{\Delta}}_+(E))$ by 
\[
L^{\mathfrak{\Delta}}f|_{M_+(E)}=L((f-f(\mathfrak{\Delta}))|_{M_+(E)}), \quad L^{\mathfrak{\Delta}}f(\mathfrak{\Delta})=0
\]
for all $f \in C(M^{\mathfrak{\Delta}}_+(E))$ such that $(f-f(\mathfrak{\Delta})_{M_+(E)}) \in F^D_c$.

Then \cite[Theorem~4.5.4]{EK:09} yields that for every initial condition in $M^{\mathfrak{\Delta}}_+(E)$, 
there exists a solution $(\mu_t)_{t\in[0,T]}$ to the martingale problem for $L^{\mathfrak{\Delta}}$ with càdlàg sample paths  taking values in $M^{\mathfrak{\Delta}}_+(E)$.
Indeed, we obtain that
\eqref{martprob} is a bounded local martingale (and thus a true martingale) for each $f \in F^D_c$ where $L$ is replaced by $L^{\mathfrak{\Delta}}$. Moreover, by Proposition 2 in \cite{BE:85} we also know that $t\mapsto f(\mu_t)$ is continuous for each $f \in F^D_c$.

\

Fix now $n\geq\langle 1,\mu_0\rangle$, define  $\tau_n:=\inf\{t>0\colon\langle 1,\mu_t\rangle>n\}$ and  set $\tau_{ \mathfrak{\Delta}}:=\lim_{n\to\infty}\tau_n$. We now aim to show that for all $t \in [0,T] $ we have that $\P(\tau_{ \mathfrak{\Delta}}>t)=1$, showing that $\P(\mu_t\neq  \mathfrak{\Delta})=1$.
Fix $m>n$ and consider a map $\phi_m\in C_c^\infty(\R)$ such that $\phi_m(x)=x$ for $|x|\leq m$. Set
   $f(\nu):=\langle 1,\nu\rangle$ and $f_m:=\phi_m\circ f$. By continuity of $t\mapsto f_m(\mu_t)$, we already know that
  $$\tau_n<\tau_{ \mathfrak{\Delta}}\qquad\text{ and }\qquad\langle 1,\mu_{\tau_n\land t}\rangle\leq n$$
   for each $t\in[0,T]$. Observe then that
$\partial f_m(\nu)=\partial f(\nu)\equiv 1$ and $\partial^2 f_m(\nu)=\partial^2 f(\nu) \equiv 0$ for each $\nu\in M_+(E)$ such that $\langle 1, \nu\rangle\leq m$. This implies together with  linearity of $Q$ in its first argument   that
$$
M_{\mu_n\land t}
=f_m(\mu_{\tau_n\land t})-f_m(\mu_0)-\int_0^{\tau_n\land t}  Lf_m(\mu_s)ds
$$
which is a  local martingale. Since it is also bounded due to to condition \eqref{eq:growth}, we can conclude that it is a true martingale. 

Fatou's lemma  then yields
$$\E[\lim_{n\to\infty}f(\mu_{\tau_n\land t})  ]
\leq\lim_{n\to\infty}\E[f(\mu_{\tau_n\land t})
]
\leq \lim_{n\to\infty}\bigg(f(\mu_{0})+C\int_0^{ t}1+  \E[f(\mu_{\tau_n\land s})] ds\bigg).$$
By the Gronwall inequality we can thus conclude that
$$\E[\lim_{n\to\infty}n\mathbbm{1}_{\{\tau_n\leq t\}}+\lim_{n\to\infty}\langle 1,\mu_t  \rangle\mathbbm{1}_{\{\tau_n> t\}}]
=\E[\lim_{n\to\infty}f(\mu_{\tau_n\land t})
]\leq (f(\mu_0)+Ct)\exp(Ct)$$
and hence that $\P(\tau_{ \mathfrak{\Delta}}\leq t)\leq \lim_{n\to\infty}\P(\tau_n\leq t)=0$. This proves that every solution to the martingale problem takes values in $M_+(E)$ and hence the assertion. 

\end{proof}


\end{document}